\newtheorem{theorem}{Theorem}[section]
\newtheorem{corollary}[theorem]{Corollary}
\newtheorem{lemma}[theorem]{Lemma}
\newtheorem{proposition}[theorem]{Proposition}
\newtheorem{observation}[theorem]{Observation}
\newcommand\DELETE[1]{}
\begin{document}

\title{{\bf On homomorphism of oriented graphs with respect to push operation}}
\author{ {\sc Sagnik Sen}\\
\mbox{}\\
{\small Indian Statistical Institute, Kolkata, India}
}

\date{\today}

\maketitle

\begin{abstract}
An oriented graph is a directed graph without any cycle of length at most 2. 
 To push a vertex of a directed graph is to reverse the orientation of the arcs incident to that vertex.
Klostermeyer and MacGillivray  defined push graphs which are equivalence class of oriented graphs with respect to vertex pushing operation. 
They studied the homomorphism of the equivalence classes of oriented graphs with respect to push operation.  
In this article, we further study the same topic and answer some of the questions asked in the above mentioned work. 
The anti-twinned  graph of an oriented graph is obtained by adding and pushing a copy of each of its vertices. 
In particular, we show that two oriented graphs are in a push relation if and only if they have isomorphic anti-twinned graphs. 
Moreover, we study oriented homomorphisms of outerplanar graphs with girth at least five, planar graphs and planar graphs with girth at least eight with respect to the push operation.  
\end{abstract}

\noindent \textbf{Keywords:} oriented graphs, push operation, graph homomorphism, chromatic number, planar graphs.


\section{Introduction and preliminaries}

An {\textit{oriented graph}} 
 is a directed graph with no cycle of length 1 or 2. By replacing each edge of a simple graph $G$ with an arc (ordered pair of vertices)
  we obtain an oriented graph $ \overrightarrow{G}$;  $ \overrightarrow{G}$ is  an \textit{orientation} of $G$ and $G$ is the \textit{underlying graph} of $ \overrightarrow{G}$.
  We denote by $V( \overrightarrow{G})$  and $A( \overrightarrow{G}$)   respectively  the set of vertices  and arcs  of $ \overrightarrow{G}$.   
For an arc $ \overrightarrow{uv}$ the vertex $u$ is an \textit{in-neighbor} of $v$ and $v$ is an \textit{out-neighbor} of $u$. 
 The set of all in-neighbors and the set of all out-neighbors of $v$ are denoted by $N_{\overrightarrow{G}}^-(v)$ 
  and $N_{\overrightarrow{G}}^+(v)$, respectively.

Let $ \overrightarrow{G}$ and $ \overrightarrow{H}$ be two oriented graphs. 
A homomorphism of $ \overrightarrow{G}$ to $ \overrightarrow{H}$ is a
 mapping $\phi: V(\overrightarrow{G}) \rightarrow V(\overrightarrow{H})$ which preserves the arcs, that is, $uv \in A(\overrightarrow{G})$ implies $\phi(u)\phi(v) \in A(\overrightarrow{H})$. 
 We write $ \overrightarrow{G} \rightarrow  \overrightarrow{H}$ whenever there exists a 
 homomorphism of $ \overrightarrow{G}$ to $ \overrightarrow{H}$ and say that $ \overrightarrow{H}$ \textit{bounds} $ \overrightarrow{G}$.   
 The \textit{oriented chromatic number} $\chi_o( \overrightarrow{G})$ of an oriented graph $\overrightarrow{G}$ is 
 then the minimum order  of an oriented graph $\overrightarrow{H}$ such that
$\overrightarrow{G}$
admits a homomorphism to $\overrightarrow{H}$~\cite{orientedchi}.

  To \textit{push} a vertex $v$ of a directed graph $\overrightarrow{G}$ is to change the 
  orientations  of all the arcs (that is, to replace the arc $ \overrightarrow{uv}$ by $\overrightarrow{vu}$) incident with $v$. 
  Vertex pushing  of directed graphs has been studied by several 
  researchers~\cite{fisher-push, klostermeyer1, klostermeyer2, Mosesian, Pretzel1, Pretzel2, Pretzel3, Garyandwood}
  while Ochem and Pinlou~\cite{OPgirth4} used 
the push operation on oriented graphs for 
proving  the 
upper bounds of the  oriented chromatic number for  the families of triangle-free planar graphs and of 2-outerplanar graphs. 
Finally, Klostermeyer and MacGillivray brought these two popular field of studies 
together in their work~\cite{push} and considered the push operation on oriented graphs to define  equivalence classes of oriented graphs and studied homomorphisms between them.

  Two  oriented graphs ${\overrightarrow{G}}^{\text{\tiny{1}}}$
 and ${\overrightarrow{G}}^{\text{\tiny{2}}}$ are in a \textit{push relation} if it is possible to obtain ${\overrightarrow{G}}^{\text{\tiny{2}}}$ by pushing some vertices 
 of ${\overrightarrow{G}}^{\text{\tiny{1}}}$.
 Note that this push relation is in fact an
equivalence relation. A \textit{push graph}  $[\overrightarrow{G}]$ is an equivalance class of oriented
graphs 
(where ${\overrightarrow{G}}^{\text{\tiny{1}}}$ is an element of the equivalence class) with respect to the
above mentioned relation. 
An element ${\overrightarrow{G}}^{\text{\tiny{1}}}$  of the equivalence class $[\overrightarrow{G}]$ is a
\textit{presentation} of $[\overrightarrow{G}]$. 
We use the notation  ${\overrightarrow{G}}^{\text{\tiny{1}}} \in [\overrightarrow{G}]$ 
for
${\overrightarrow{G}}^{\text{\tiny{1}}}$  is a
presentation of $[\overrightarrow{G}]$.

Note that the graphs having a push relation have the same underlying graph. Hence, we can define the \textit{underlying graph} of a push graph $[\overrightarrow{G}]$ by the underlying graph of any presentation of it and 
denote it by $G$. 
The \textit{order} of a push graph is the number of vertices of its underlying graph, hence can be denoted by 
$|V( \overrightarrow{G})|$ or $|V(G)|$. 
Intuitively, we can treat a push graph as an oriented graph whose arcs, incedent to a vertex, are able to switch directions. 
Given  any oriented graph $ \overrightarrow{G}$ we can consider the corresponding push graph 
$ [\overrightarrow{G}]$.

A push graph $[\overrightarrow{G}]$ admits a homomorphism $\phi$ to an oriented graph 
$\overrightarrow{H}$ if there exist  a presentation 
${\overrightarrow{G}}^{\text{\tiny{1}} } \in [\overrightarrow{G}]$ such that $\phi$ is a homomorphism of ${\overrightarrow{G}}^{\text{\tiny{1}}}$  to 
$\overrightarrow{H}$. We write $ [\overrightarrow{G}] \rightarrow  \overrightarrow{H}$ whenever there exists a 
 homomorphism of $ [\overrightarrow{G}]$ to $ \overrightarrow{H}$.

A push graph $[\overrightarrow{G}]$ admits a homomorphism $\phi$ to a push graph 
$[\overrightarrow{H}]$ if there exist presentations 
${\overrightarrow{G}}^{\text{\tiny{1}} } \in [\overrightarrow{G}]$ and 
${\overrightarrow{H}}^{\text{\tiny{1}}} \in [\overrightarrow{H}]$   
such that $\phi$ is a homomorphism of ${\overrightarrow{G}}^{\text{\tiny{1}}}$  to 
${\overrightarrow{H}}^{\text{\tiny{1}}}$. 
 We write $ [\overrightarrow{G}] \rightarrow  [\overrightarrow{H}]$ whenever there exists a 
 homomorphism of $ [\overrightarrow{G}]$ to $ [\overrightarrow{H}]$ and say that $ [\overrightarrow{H}]$ \textit{bounds} $ [\overrightarrow{G}]$. 
      As in  general graph homomorphisms, for push graphs also, a bijective homomorphism whose inverse is also a 
     homomorphism is 
     an \textit{isomorphism}.

The push chromatic number $\chi_{p}( [\overrightarrow{G}])$ of the push graph $[\overrightarrow{G}]$ is the minimum order  of a push graph $[\overrightarrow{H}]$ such that
$[\overrightarrow{G}]$
admits a homomorphism to $[\overrightarrow{H}]$.  
It is equivalent to say that the push chromatic number $\chi_{p}( [\overrightarrow{G}])$ of the push graph $[\overrightarrow{G}]$ is the minimum order  of an oriented graph $\overrightarrow{H}$ such that
$[\overrightarrow{G}]$
admits a homomorphism to $\overrightarrow{H}$.

The push chromatic number $\chi_{p}(G)$ of  an undirected graph 
$G$  is the maximum of the push chromatic numbers of all the push graphs with underlying graph  $G$. The push chromatic number $\chi_{p}(\mathcal{F})$ of  a family $ \mathcal{F}$ of graphs is the maximum of the push  chromatic numbers of the graphs from the family $\mathcal{F}$.

\begin{figure}

\centering
\begin{tikzpicture}

\filldraw [black] (.5,1.5) circle (2pt) {node[below]{$v_2$}};
\filldraw [black] (.5,2.5) circle (2pt) {node[above]{$v_1$}};

\filldraw [black] (3.5,1.5) circle (2pt) {node[below]{$v_2^\prime$}};
\filldraw [black] (3.5,2.5) circle (2pt) {node[above]{$v_1^\prime$}};

\draw[->] (.5,1.5) -- (.5,2);
\draw[-] (.5,2) -- (.5,2.5);

\draw[->] (3.5,1.5) -- (3.5,2);
\draw[-] (3.5,2) -- (3.5,2.5);

\draw[-<] (.5,1.5) -- (2.5,2.16);
\draw[-] (2.5,2.16) -- (3.5,2.5);

\draw[-<] (3.5,1.5) -- (1.5,2.16);
\draw[-] (1.5,2.16) -- (0.5,2.5);

\draw[thick] (.5,2) ellipse (0.6cm and 1.1cm);
\draw[thick] (3.5,2) ellipse (0.6cm and 1.1cm);

\node at (-1,2) {$R(\overrightarrow{G})=$};

\node at (.5,3.8) {$V(\overrightarrow{G})$};

\node at (.5,3.4) {$||$};

\node at (.5,3.4) {$||$};

\node at (2,0) {\textbf{(a)}};


\filldraw [black] (7,1.5) circle (2pt) {node[below]{}};
\filldraw [black] (9,1.5) circle (2pt) {node[above]{}};

\filldraw [black] (7,3.5) circle (2pt) {node[below]{}};
\filldraw [black] (9,3.5) circle (2pt) {node[above]{}};

\draw[-<] (7,1.5) -- (8,1.5);
\draw[-] (8,1.5) -- (9,1.5);

\draw[-<] (7,3.5) -- (8,3.5);
\draw[-] (8,3.5) -- (9,3.5);

\draw[-<] (7,1.5) -- (7,2.5);
\draw[-] (7,2.5) -- (7,3.5);

\draw[->] (9,1.5) -- (9,2.5);
\draw[-] (9,2.5) -- (9,3.5);

\node at (8,0) {\textbf{(b)}};

\end{tikzpicture}

\caption{\textbf{(a)} The anti-twinned graph $R(\vec{G})$ of $[\vec{G}]$. \textbf{(b)} A push invarient graph $\vec{UC}_4$.}\label{fig anti-twinned graph orientable}

\end{figure}
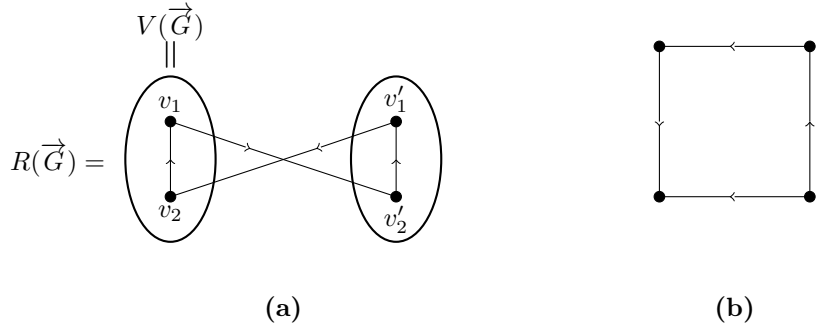

Klosetermeyer and MacGillivray~\cite{push} showed that deciding whether a push graph admits a $k$-push coloring or not is 
 NP-complete   for $k \geq 5$.  
In their paper, they  suggested several future directions regarding the topic among which we address here the following ones:

\begin{itemize}
\item[\textbf{(A)}] Is it true that two oriented graphs belong to  the same equivalence class  if and only if their anti-twinned (defined in Section~\ref{sec results}) graphs are isomorphic
 (see Fig.~\ref{fig anti-twinned graph orientable}(a))?

\item[\textbf{(B)}] What are the outerplanar graphs that have push chromatic number three.

\item[\textbf{(C)}] Investigating push chromatic  number for different graph families, especially, families of planar graphs.
\end{itemize}

In Section~\ref{sec results}  we address the above three points. First we will show that the question asked in \textbf{(A)} has a positive answer.  
Then we prove that outerplanar with \textit{girth} (length of the smallest cycle) at least five admits a push 3-coloring 
while observing that it is not possible to relax the girth restriction from that result. This will partially answer point \textbf{(B)}.
Then we deal with point \textbf{(C)} and  show that the push chromatic number of planar graphs lies between 10 and 40. 
Moreover, we prove that the push chromatic number for the family of planar graphs with girth 8 is 4.

\section{Results}\label{sec results}
The anti-twinned graph $R(\overrightarrow{G})$ of an oriented graph 
  $\overrightarrow{G}$
  was defined and used by Klostermeyer and MacGillivray in~\cite{push}.

Let $[\overrightarrow{G}]$ be a push graph with vertex set $V(G) = \{v_1, v_2, ..., v_k \}$ and 
$ {\overrightarrow{G}}^{\text{\tiny{1}}} \in [\overrightarrow{G}]$. Then the
\textit{anti-twinned graph} $R(\overrightarrow{G})$ of $[\overrightarrow{G}]$ is the oriented graph with the set of vertices and the set of arcs as the following (also see Fig.~\ref{fig anti-twinned graph orientable}(a)):
 
 \begin{align}\nonumber
 V(R(\overrightarrow{G})) = \{v_1, v_2, ..., v_k \} \cup \{v_1^\prime, v_2^\prime, ..., v_k^\prime \} \\ \nonumber
 A(R(\overrightarrow{G})) = \{\overrightarrow{v_iv_j}, \overrightarrow{v_i^\prime v_j^\prime}, \overrightarrow{v_jv_i^\prime}, \overrightarrow{v_j'v_i} \mid \overrightarrow{v_iv_j} \in A({\overrightarrow{G}}^{\text{\tiny{1}}}) \}. \nonumber
 \end{align}

  Intuitively, $R(\overrightarrow{G})$ is the graph obtained from $[\overrightarrow{G}]$ by adding and pushing  a twin vertex $v_i^\prime$  for each of the vertices $v_i$ of ${\overrightarrow{G}}^{\text{\tiny{1}}}$. 
  Observe that $R(\overrightarrow{G})$ is well defined upto isomorphism, that is, for any presentation of 
  $[\overrightarrow{G}]$, we will get the same oriented graph $R(\overrightarrow{G})$.

Now we will prove a result which answers point \textbf{(A)}
mentioned in the introduction.

 \begin{theorem}\label{corollary push and oriented}
Two oriented graphs $\overrightarrow{G}$ and  $\overrightarrow{H}$ are in the same  equivalence class with respect to the push operation if and only if 
their corresponding anti-twinned graphs $R(\overrightarrow{G})$ and $R(\overrightarrow{H})$ are isomorphic.
 \end{theorem}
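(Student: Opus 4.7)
The forward direction is a routine verification.  Push-equivalence is generated by single-vertex pushes, so it suffices to show that when $\overrightarrow{H}$ is obtained from $\overrightarrow{G}$ by pushing a single vertex $v_k$, the bijection $\psi_k\colon V(R(\overrightarrow{G}))\to V(R(\overrightarrow{H}))$ that swaps $v_k$ with $v_k'$ and fixes every other vertex is an isomorphism of oriented graphs.  This is immediate from the definition of $R$: arcs of $\overrightarrow{G}$ not incident to $v_k$ remain unchanged and induce the same four arcs in $R$, while for each arc incident to $v_k$, both the arc in $\overrightarrow{G}$ and the four resulting arcs in $R$ transform consistently under $\psi_k$.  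Iterating along a sequence of single pushes yields $R(\overrightarrow{G})\cong R(\overrightarrow{H})$ for any push-equivalent $\overrightarrow{G}$ and $\overrightarrow{H}$.

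For the reverse direction, I would assume $\psi\colon R(\overrightarrow{G})\to R(\overrightarrow{H})$ is an isomorphism and aim to produce an element of $[\overrightarrow{G}]$ that is isomorphic as an oriented graph to $\overrightarrow{H}$.  Let $\overrightarrow{G}^{\ast}$ be obtained from $\overrightarrow{G}$ by pushing the subset $P=\{v_i\in V(\overrightarrow{G})\mid\psi(v_i)\in V(\overrightarrow{H})'\}$, so that $\overrightarrow{G}^{\ast}\in[\overrightarrow{G}]$ by construction.  Define $\phi\colon V(\overrightarrow{G})\to V(\overrightarrow{H})$ by $\phi(v_i)=w_j$, where $j$ is the unique index with $\psi(v_i)\in\{w_j,w_j'\}$.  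If $\phi$ is a bijection, a direct arc-chase shows it is an isomorphism $\overrightarrow{G}^{\ast}\to\overrightarrow{H}$, finishing the proof.

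The crucial and subtle step --- and where I expect the main work --- is verifying that $\phi$ is a bijection, equivalently that $\psi$ sends each canonical twin pair $\{v_i,v_i'\}$ of $R(\overrightarrow{G})$ to a canonical twin pair $\{w_j,w_j'\}$ of $R(\overrightarrow{H})$.  This would be automatic if canonical pairs were characterized intrinsically in $R$ as the only ``anti-twin'' pairs (non-adjacent vertices $x,y$ with $N^{+}(x)=N^{-}(y)$ and $N^{-}(x)=N^{+}(y)$), but small examples such as the orientation $v_1\to v_3\to v_2$ of $P_3$ already exhibit accidental anti-twin pairs, so this intrinsic characterization is not tight.  To bypass this, I would consider the fixed-point-free involutive automorphism $\sigma=\psi^{-1}\tau_H\psi$ of $R(\overrightarrow{G})$, where $\tau_H$ is the twin involution of $R(\overrightarrow{H})$, and prove the sub-lemma that $\sigma$ is conjugate to the twin involution $\tau_G$ inside $\mathrm{Aut}(R(\overrightarrow{G}))$; pre-composing $\psi$ with a conjugating automorphism $\alpha$ then yields an isomorphism that commutes with the two twin involutions and hence respects canonical pairs.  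The sub-lemma itself should follow by analyzing the arc structure of $R(\overrightarrow{G})$ between any two $\sigma$-pairs: the anti-twin property plus the fact that $\sigma$ is an automorphism force these arcs to form either the empty set or a directed $4$-cycle of exactly the shape produced by a canonical pair, and this rigidity is what forces the required conjugacy.
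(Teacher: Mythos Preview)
Your forward direction and your reduction of the reverse direction to the conjugacy sub-lemma are both sound (one small slip: ``$\phi$ bijective'' is strictly weaker than ``$\psi$ respects canonical pairs'', so your ``equivalently'' is off, but the stronger condition is what you go on to argue and it certainly suffices). The genuine gap is in your justification of the sub-lemma. The $4$-cycle observation is correct --- it shows that for any transversal $T$ of the $\sigma$-orbits the induced subgraph $\overrightarrow{G}_T$ satisfies $R(\overrightarrow{G}_T)\cong R(\overrightarrow{G})$ via an isomorphism carrying $\tau_{G_T}$ to $\sigma$ --- but this does not yet conjugate $\sigma$ to $\tau_G$: you would still need $\overrightarrow{G}_T$ to be push-equivalent to $\overrightarrow{G}$, which is essentially the theorem itself. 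One way to close the loop is to note that the two fixed-point-free involutions $\sigma$ and $\tau_G$ always admit a \emph{common} transversal (the multigraph with both $\sigma$-edges and $\tau_G$-edges is $2$-regular with alternating edge types, hence a disjoint union of even cycles); for such a $T$ the graph $\overrightarrow{G}_T$ is visibly a push of $\overrightarrow{G}$. Your ``rigidity'' sentence does not supply this missing idea.

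The paper bypasses the conjugacy formulation entirely. Given an isomorphism $f$ not yet respecting canonical pairs, it picks $v$ with $f(v')\neq f(v)'$, sets $u=f^{-1}(f(v)')$, and swaps the $f$-values on $v'$ and $u$. Since $v'$ and $u$ are both anti-twins of $v$ in $R(\overrightarrow{G})$, they are \emph{twins} of each other, so their transposition is an automorphism of $R(\overrightarrow{G})$; the swap therefore produces a new isomorphism with a strictly larger set $\{v:f(v')=f(v)'\}$, and one iterates until this set is all of $V(\overrightarrow{G})$. In your language this is an explicit construction of the conjugating $\alpha$ as a product of twin-transpositions, and it is both shorter and more elementary than the transversal route.
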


 \begin{proof}If $\overrightarrow{G}$ and  $\overrightarrow{H}$ are in the same push equivalence class then 
 their corresponding anti-twinned graphs $R(\overrightarrow{G})$ and $R(\overrightarrow{H})$ are isomorphic was shown by Klostermeyer and MacGillivray~\cite{push}. So it is enough to prove only the ``if'' part of the theorem.

For any isomorphism $h$ of $R(\overrightarrow{G})$ to $R(\overrightarrow{H})$ define the set 
$$Y_h = \{v \in V(\overrightarrow{G}) | h(v') = h(v)'\}.$$ 
Furthermore,  if $x$ is a vertex of an oriented graph $\overrightarrow{X}$ then $x'$ denotes its corresponding anti-twin in $R(\overrightarrow{X})$. Moreover, we fix the convention $x'' = x$.

 Let $\overrightarrow{G}$ and $\overrightarrow{H}$ be two oriented graphs and let $f$ be an isomorphism of 
 $R(\overrightarrow{G})$ to $R(\overrightarrow{H})$. Note that if $Y_f = V(\overrightarrow{G})$ then we are done.

 Therefore, let $v \in V(\overrightarrow{G}) \setminus Y_f$, that is,  $f(v') \neq f(v)'$. Now we define the following:

$$g(x) = \begin{cases} f(x) &\mbox{if } x \neq v', f^{-1}(f(v)'), \\
f(v') & \mbox{if } x = f^{-1}(f(v)'), \\
f(v)' & \mbox{if } x = v'.\end{cases} $$

Intuitively, we just interchanged the images of $v'$ and $f^{-1}(f(v)')$ to obtain $g$. As $f$ was a bijective function from $V(R(\overrightarrow{G}))$ to $V(R(\overrightarrow{H}))$, so is $g$. 
If we can show that both $g$ and $g^{-1}$ are oriented graph homomorphisms between $R(\overrightarrow{G})$ and $R(\overrightarrow{H})$, then we will 
end up proving that $g$ is an oriented graph isomorphism of 
 $R(\overrightarrow{G})$ to $R(\overrightarrow{H})$. For convenience let $f^{-1}(f(v)') = u$.

First we will show  that $g$ is a homomorphism of $R(\overrightarrow{G})$ to $R(\overrightarrow{H})$. Let $\overrightarrow{ab}$ be an arc in 
$R(\overrightarrow{G})$. If $a,b \notin \{v', u\}$ then there is an arc from $g(a)=f(a)$ to $g(b)=f(b)$ as $f$ itself  
is an isomorphism. 

Now suppose $a = v'$ and $b \notin \{v', u\}$. This implies $b \in N^+(v') = N^-(v)$. Therefore, 
$$g(b)= f(b) \in N^-(f(v)) = N^+(f(v)')
= N^+(g(a)).$$
Similarly, one can argue for the case when $b = v'$ and $a \notin \{v', u\}$.

Then suppose that $a = u$ and $b \notin \{v', u)\}$. 
This implies $b \in N^+(u)$. 
Hence $$g(b) = f(b) \in N^+((f(v)') = N^-((f(v)) = N^+(f(v')) = N^+(g(f^{-1}(f(v)')) = N^+(g(a)).$$ 
Similarly, one can argue for the case when $b = u$ and $a \notin \{v', u\}$.

Note that $f(v)$ and $f(v)'$ are non-adjacent. Hence, $v = f^-(f(v)$ and $f^{-1}(f(v)')$ are also non-adjacent. Therefore, $v'$ and $f^{-1}(f(v)')$
are non-adjacent. That takes care of the case when $\{a,b\} = \{v', u\}$.

\medskip

Now we will prove that $g^{-1}$ is a homomorphism. 
Let $\overrightarrow{ab}$ be an arc in 
$R(\overrightarrow{H})$.
If $a,b \notin \{f(v'), f(v)'\}$ then there is an arc from $g^{-1}(a)=f^{-1}(a)$ to $g(b)=f^{-1}(b)$ as $f^{-1}$ itself  
is an isomorphism.

Now suppose $a = f(v')$ and $b \notin \{f(v'), f(v)'\}$. 
This implies $$b \in N^+(a) = N^+(f(v')) = N^-(f(v)) = N^+(f(v)').$$ 
Therefore, 
$$g^{-1}(b)= f^{-1}(b) \in N^+(f^{-1}(f(v)')) = N^+(u) =N^+(g^{-1}(g(u)))=$$
$$= N^+(g^{-1}(f(v'))) = N^+(g^{-1}(a)).$$
Similarly, one can argue for the case when $b = f(v')$ and $a \notin \{f(v'), f(v)'\}$.

Then suppose that $a = f(v)'$ and $b \notin \{f(v'), f(v)'\}$. 
This implies $$b \in N^+(a) = N^+(f(v)') = N^-(f(v)) = N^+(f(v')).$$ 
Hence $$g^{-1}(b)= f^{-1}(b) \in N^+(f^{-1}(f(v'))) =N^+(v'))=$$
$$= N^+(g^{-1}(f(v)')) = N^+(g^{-1}(a)).$$
Similarly, one can argue for the case when $b = f(v')$ and $a \notin \{f(v'), f(v)'\}$. 

\medskip

So we have shown that $g$ is an isomorphism. Also note that $u, u' \notin Y_f$, $Y_f \subset Y_g$ and either $u \in Y_g$
or $u' \in Y_g$. 
Hence $Y_f \varsubsetneq Y_g$. 

So we can recursively define a chain of isomorphisms 
$g= g_0, g_1, ..., g_t$ of $R(\overrightarrow{G})$ to $R(\overrightarrow{H})$  so that we have 
$$Y_{g_0}  \varsubsetneq  Y_{g_1}  \varsubsetneq .... \varsubsetneq Y_{g_t}  = V(\overrightarrow{G}).$$
This completes the proof.
 \end{proof}

\bigskip

Now we will prove a result that will give us more insight regarding the relation between oriented graph homomorphism and push operation and also help us to prove a particular step of an upcoming theorem.

\begin{proposition}\label{th any target}
Let $f$ be a homomorphism of $ \overrightarrow{G}$ to $ \overrightarrow{H}$. 
Then for each presentation ${\overrightarrow{H}}^{\text{\tiny{1}}} \in [\overrightarrow{H}]$ there exists a presentation 
${\overrightarrow{G}}^{\text{\tiny{1}}} \in [\overrightarrow{G}]$ such that $f$ is a homomorphism of ${\overrightarrow{G}}^{\text{\tiny{1}}}$
to ${\overrightarrow{H}}^{\text{\tiny{1}}}$.
\end{proposition}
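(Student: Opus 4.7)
The plan is to show that any set of vertex pushes applied to $\overrightarrow{H}$ can be matched by a corresponding set of vertex pushes on $\overrightarrow{G}$, so that $f$ remains arc-preserving after both transformations. The concrete combinatorial rule I would propose is the following: if ${\overrightarrow{H}}^{\text{\tiny{1}}}$ is obtained from $\overrightarrow{H}$ by pushing a set $S \subseteq V(\overrightarrow{H})$, then take ${\overrightarrow{G}}^{\text{\tiny{1}}}$ to be the orientation obtained from $\overrightarrow{G}$ by pushing the preimage $f^{-1}(S)$.

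First I would record the useful reformulation of pushing a set: when a set $S$ of vertices is pushed (in any order), an arc is reversed if and only if \emph{exactly one} of its endpoints lies in $S$, since an arc with both endpoints in $S$ gets reversed twice and is therefore unchanged. This lets me encode any presentation ${\overrightarrow{H}}^{\text{\tiny{1}}} \in [\overrightarrow{H}]$ by such a subset $S$ of $V(\overrightarrow{H})$, and similarly encode the candidate ${\overrightarrow{G}}^{\text{\tiny{1}}}$ by $f^{-1}(S) \subseteq V(\overrightarrow{G})$.

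Then I would verify the rule by a direct arc-by-arc check. Fix an arc $\overrightarrow{ab}$ of $\overrightarrow{G}$, and set $x = f(a)$, $y = f(b)$, so that $\overrightarrow{xy}$ is an arc of $\overrightarrow{H}$. Since $\overrightarrow{H}$ is an oriented graph it has no loops, so $x \neq y$, and consequently
\[
\bigl|\{a,b\} \cap f^{-1}(S)\bigr| = \bigl|\{x,y\} \cap S\bigr|.
\]
Hence the arc between $a$ and $b$ is reversed in passing from $\overrightarrow{G}$ to ${\overrightarrow{G}}^{\text{\tiny{1}}}$ if and only if the arc between $x$ and $y$ is reversed in passing from $\overrightarrow{H}$ to ${\overrightarrow{H}}^{\text{\tiny{1}}}$. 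Translating back into arc directions, the orientation of the arc between $f(a)$ and $f(b)$ in ${\overrightarrow{H}}^{\text{\tiny{1}}}$ matches the orientation of the arc between $a$ and $b$ in ${\overrightarrow{G}}^{\text{\tiny{1}}}$, which is exactly what is needed for $f$ to be a homomorphism from ${\overrightarrow{G}}^{\text{\tiny{1}}}$ to ${\overrightarrow{H}}^{\text{\tiny{1}}}$.

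The only subtle point, and the step I would flag as the potential obstacle, is the use of the no-loop property of $\overrightarrow{H}$ to conclude $x \neq y$. Without it, two endpoints of an arc in $\overrightarrow{G}$ could collapse onto a single vertex of $\overrightarrow{H}$, and then the parity-matching identity above would break: the arc in $\overrightarrow{G}$ could be reversed by the pushing while its ``image loop'' in $\overrightarrow{H}$ has nothing to reverse. Since $\overrightarrow{H}$ is an oriented graph this situation never arises, and the remainder of the verification reduces to the straightforward parity calculation displayed above.
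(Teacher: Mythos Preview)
Your proof is correct and follows exactly the same approach as the paper: push the preimage $f^{-1}(S)$ in $\overrightarrow{G}$ of the set $S$ pushed in $\overrightarrow{H}$. In fact you supply more detail than the paper, which simply declares the verification ``easy to check''; your parity argument and the remark about $x\neq y$ make the routine check explicit.
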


\begin{proof}
Let $f$ be a homomorphism of $ \overrightarrow{G}$ to $ \overrightarrow{H}$ and ${\overrightarrow{H}}^{\text{\tiny{1}}} \in [\overrightarrow{H}]$ 
be any presentation.  Suppose one can obtain ${\overrightarrow{H}}^{\text{\tiny{1}}}$ from $ \overrightarrow{H}$ by pushing the set of vertices 
$V_1 \subseteq V( \overrightarrow{H})$. Now obtain the presentation ${\overrightarrow{G}}^{\text{\tiny{1}}}$ from $\overrightarrow{G}$ by 
pushing the pre-images of $V_1$, that is, the set of vertices $f^{-1}(V_1) \subseteq V( \overrightarrow{G})$. It is easy to check that 
$f$ is a homomorphism of ${\overrightarrow{G}}^{\text{\tiny{1}}}$
to ${\overrightarrow{H}}^{\text{\tiny{1}}}$.
\end{proof}

A \textit{splitable oriented graph} $\overrightarrow{S}$ is an oriented graph isomorphic  to the anti-twinned graph  $R(\overrightarrow{T})$ of some oriented graph $\overrightarrow{T}$. The oriented graph $ \overrightarrow{T}$ is the \textit{split graph} of $ \overrightarrow{S}$.
The following two results will be instrumental in proving other results of this article.

\begin{observation}\label{push why splitable}
An oriented graph $ \overrightarrow{S}$ is splitable if and only if it is possible to partition the set of vertices  $V(\overrightarrow{S})$ into two equal parts $V_1$ and $V_2$ with a bijection $f: V_1 \rightarrow V_2$ such that 
 for all $u \in V_1$ we have $N^+(u) = N^-(f(u))$ and $N^-(u) = N^+(f(u))$. 
\end{observation}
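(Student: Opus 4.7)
The plan is to prove both directions by directly manipulating the defining arc set
\[
A(R(\overrightarrow{T})) = \{\overrightarrow{v_iv_j},\ \overrightarrow{v_i'v_j'},\ \overrightarrow{v_jv_i'},\ \overrightarrow{v_j'v_i} : \overrightarrow{v_iv_j} \in A(\overrightarrow{T})\}
\]
and reading off the four arc-types in terms of the neighborhood identities in the statement. For the forward implication, I would assume $\overrightarrow{S} \cong R(\overrightarrow{T})$ for some oriented graph $\overrightarrow{T}$, take $V_1$ to be the image of $V(T)$, $V_2$ the image of $\{v' : v \in V(T)\}$, and set $f(v) = v'$. Computing $N^+_{R(\overrightarrow{T})}(v)$ for $v \in V(T)$ by splitting its out-neighbors into those lying in $V(T)$ (coming from the first arc-type, contributing $N^+_{\overrightarrow{T}}(v)$) and those in $V(T)'$ (coming from the third type, contributing $\{u' : u \in N^-_{\overrightarrow{T}}(v)\}$), and decomposing $N^-_{R(\overrightarrow{T})}(v')$ analogously via the second and third types, both sets turn out to be equal. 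This gives $N^+(v) = N^-(f(v))$, and a symmetric calculation produces $N^-(v) = N^+(f(v))$.

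For the reverse implication, assume the partition $V_1 \cup V_2$ and the bijection $f$ exist, let $\overrightarrow{T}$ be the subgraph of $\overrightarrow{S}$ induced on $V_1$, and define $\sigma : V(\overrightarrow{S}) \to V(R(\overrightarrow{T}))$ by $\sigma(v) = v$ for $v \in V_1$ and $\sigma(w) = f^{-1}(w)'$ for $w \in V_2$. I would first extend $f$ to an involution on $V(\overrightarrow{S})$ (by setting $f(w) = f^{-1}(w)$ for $w \in V_2$), after which the identities $N^+(u) = N^-(f(u))$ and $N^-(u) = N^+(f(u))$ become symmetric and hold for \emph{every} vertex $u$. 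A small but essential observation is that these identities force $u$ and $f(u)$ to be non-adjacent in $\overrightarrow{S}$---otherwise $f(u)$ would lie in its own in- or out-neighborhood, impossible in an oriented graph---so that the construction stays within the world of oriented graphs. Arc preservation then reduces to a four-case analysis based on the location of the endpoints of an arc of $\overrightarrow{S}$ in $V_1$ and $V_2$: the $V_1 \to V_1$ case matches the first arc-type of $R(\overrightarrow{T})$ by construction, and the remaining cases ($V_2 \to V_2$, $V_1 \to V_2$, and $V_2 \to V_1$) are translated into the second, third, and fourth types respectively by applying the neighborhood identities to move one or both endpoints across $f$.

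The main obstacle is purely bookkeeping: one must verify that each arc of $\overrightarrow{S}$ is recovered \emph{exactly once} when read through the four types of $R(\overrightarrow{T})$, and that the translation by $\sigma$ preserves arc direction rather than reversing it in any of the mixed cases. Once the non-adjacency observation above is in hand, both directions collapse to mechanical expansions of the defining formula for $A(R(\overrightarrow{T}))$, and I do not expect any conceptual difficulty beyond keeping the book of arc-types straight.
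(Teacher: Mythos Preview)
Your proposal is correct and is exactly the detailed unpacking of what the paper asserts. The paper itself does not give a proof beyond the single sentence ``The above result follows directly from the definition of splitable oriented graph''; your argument is precisely the explicit verification of that claim, taking $V_1 = V(\overrightarrow{T})$, $V_2 = \{v' : v\in V(\overrightarrow{T})\}$, $f(v)=v'$ in one direction and reconstructing $\overrightarrow{T}$ as the subgraph induced on $V_1$ in the other, then matching arcs against the four arc-types of $R(\overrightarrow{T})$.
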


The above result follows directly from the definition of splitable oriented graph. 

\begin{lemma}\label{lem split implies good}
Let $\overrightarrow{S} = R(\overrightarrow{T})$ be a splitable graph. Then  $\overrightarrow{G} \rightarrow \overrightarrow{S}$
if and only if $[\overrightarrow{G}] \rightarrow \overrightarrow{T}$. 
\end{lemma}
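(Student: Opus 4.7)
The plan is to prove both directions by explicit construction, exploiting the natural partition $V(R(\overrightarrow{T})) = V(\overrightarrow{T}) \cup V(\overrightarrow{T})'$ together with the arc-generation rule: each arc $\overrightarrow{v_iv_j} \in A(\overrightarrow{T})$ produces exactly the four arcs $\overrightarrow{v_iv_j}, \overrightarrow{v_i'v_j'}, \overrightarrow{v_jv_i'}, \overrightarrow{v_j'v_i}$ in $R(\overrightarrow{T})$. The guiding heuristic is that in $R(\overrightarrow{T})$ the twin $v_i'$ locally looks like ``$v_i$ pushed'', so swapping whether a vertex of $\overrightarrow{G}$ is sent to $v_i$ or to $v_i'$ should be equivalent to pushing that vertex in $\overrightarrow{G}$.

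For the ``only if'' direction I would start with a homomorphism $\phi : \overrightarrow{G} \to \overrightarrow{S} = R(\overrightarrow{T})$, set $V_1 = \phi^{-1}(V(\overrightarrow{T})') \subseteq V(\overrightarrow{G})$, obtain a presentation $\overrightarrow{G}^{\text{\tiny{1}}} \in [\overrightarrow{G}]$ by pushing the vertices of $V_1$, and define $\psi : V(\overrightarrow{G}^{\text{\tiny{1}}}) \to V(\overrightarrow{T})$ by ``unpriming'' the image: $\psi(v) = \phi(v)$ if $v \notin V_1$, and otherwise $\psi(v)$ is the unique vertex with $\psi(v)' = \phi(v)$. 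To check $\psi$ is a homomorphism I would split into four cases according to how many endpoints of a given arc $\overrightarrow{uv} \in A(\overrightarrow{G})$ lie in $V_1$; in each case the direction of the arc after pushing and the form of the arcs among $\{v_i, v_j, v_i', v_j'\}$ in $R(\overrightarrow{T})$ align to produce a legitimate arc of $\overrightarrow{T}$ under $\psi$.

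For the ``if'' direction I would begin with a presentation $\overrightarrow{G}^{\text{\tiny{1}}} \in [\overrightarrow{G}]$ obtained from $\overrightarrow{G}$ by pushing a set $V_1$, together with a homomorphism $\psi : \overrightarrow{G}^{\text{\tiny{1}}} \to \overrightarrow{T}$. I would then define $\phi : V(\overrightarrow{G}) \to V(R(\overrightarrow{T}))$ by $\phi(v) = \psi(v)$ on $V(\overrightarrow{G}) \setminus V_1$ and $\phi(v) = \psi(v)'$ on $V_1$, and verify via the analogous four-case split that $\phi$ sends every arc $\overrightarrow{uv} \in A(\overrightarrow{G})$ to one of the four arcs of $R(\overrightarrow{T})$ generated by the corresponding arc of $\overrightarrow{T}$ produced by $\psi$ on $\overrightarrow{G}^{\text{\tiny{1}}}$.

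The main (and essentially only) obstacle is the bookkeeping in the case analysis: one must track two independent reversals on every arc --- whether pushing reversed its direction in $\overrightarrow{G}$, and whether each endpoint's image is primed or unprimed --- and check that their combined effect matches precisely one of the four arcs in $A(R(\overrightarrow{T}))$ spawned by a single arc of $A(\overrightarrow{T})$. This matching is powered by Observation \ref{push why splitable}, which says priming swaps in- and out-neighborhoods; that is the exact content of the identification ``prime the image $\Leftrightarrow$ push the source'' underlying both constructions, and it is also what makes the two constructions above inverse to one another.
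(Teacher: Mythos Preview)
Your proof is correct. Both directions go through exactly as you describe, and the four-case bookkeeping checks out against the arc-generation rule for $R(\overrightarrow{T})$.

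The paper takes a slightly different, more compositional route. For the ``only if'' direction it observes that the fold map $\psi(x)=\psi(x')=x$ gives a push homomorphism $[R(\overrightarrow{T})]\to[\overrightarrow{T}]$, then composes this with the trivial implication $\overrightarrow{G}\to\overrightarrow{S}\Rightarrow[\overrightarrow{G}]\to[\overrightarrow{S}]$. For the ``if'' direction it invokes a result of Klostermeyer and MacGillivray (that $[\overrightarrow{G}]\to\overrightarrow{T}$ implies $R(\overrightarrow{G})\to R(\overrightarrow{T})$) and composes with the inclusion $\overrightarrow{G}\hookrightarrow R(\overrightarrow{G})$. If one unwinds those compositions one recovers precisely your explicit maps, so the two arguments are morally identical; yours is self-contained and hands-on, while the paper's is modular and leans on the cited result. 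Your version has the advantage of not needing the external citation; the paper's version has the advantage of making the ``prime $\Leftrightarrow$ push'' correspondence visible at the level of objects (via the fold $R(\overrightarrow{T})\to\overrightarrow{T}$ and the inclusion $\overrightarrow{G}\hookrightarrow R(\overrightarrow{G})$) rather than at the level of individual arcs.
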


\begin{proof}
Let $\overrightarrow{S} = R(\overrightarrow{T})$ be a splitable graph. 
Assume that $\overrightarrow{G} \rightarrow \overrightarrow{S}$. This implies 
$[\overrightarrow{G}] \rightarrow [\overrightarrow{S}]$. 
Now consider the following function $\psi$ from $V(R(\overrightarrow{T}))$ to $V(\overrightarrow{T})$:

\begin{align}\nonumber
\psi(x) = \psi(x') = x \text{ for } x \in V(\overrightarrow{T}).
\end{align}

It is easy to check that $\psi$ is a homomorphism of 
$R(\overrightarrow{T})$ to $\overrightarrow{T}$. This implies that there exists a 
push homomorphism $[R(\overrightarrow{T})] \rightarrow [\overrightarrow{T}]$. By composing this 
homomorphism with the homomorphism $[\overrightarrow{G}] \rightarrow [\overrightarrow{S}]$
we obtain a homomorphism $[\overrightarrow{G}] \rightarrow \overrightarrow{T}$. 
This proofs the ``only if'' part.

For proving the  ``if'' part assume $[\overrightarrow{G}] \rightarrow \overrightarrow{T}$. Then 
we have $R(\overrightarrow{G}) \rightarrow \overrightarrow{S}$ due to Klostermeyer and MacGillivray~\cite{push}. 
By composing this homomorphism with the inclusion homomorphism of 
$\overrightarrow{G}$ to $R(\overrightarrow{G})$ we will be done.
\end{proof}

An outerplanar graph is a graph that can be drawn on a plane with all its vertices lying on a circle and all its edges can be drawn inside the circle without any crossing. Klostermeyer and MacGillivray~\cite{push} showed that the family of outerplanar graphs has 
push chromatic number 4.

\begin{theorem}\mbox{}\label{pushouter}
Let $\mathcal{O}_5$ be the family of outerplanar graph with girth at least five. Then  $\chi_{p} (\mathcal{O}_5) = 3$.
\end{theorem}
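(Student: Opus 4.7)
The plan is to prove the two inequalities $\chi_p(\mathcal{O}_5)\geq 3$ and $\chi_p(\mathcal{O}_5)\leq 3$ separately. For the lower bound I note that $C_5\in\mathcal{O}_5$. Any push graph of order $2$ is, up to isomorphism, either edgeless or the push class of a single arc; in either case a push homomorphism from $[\overrightarrow{C_5}]$ into it would force a proper $2$-coloring of $C_5$, which is impossible. Hence $\chi_p(\mathcal{O}_5)\geq\chi_p(C_5)\geq 3$.

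For the upper bound I would take as target the push class $[\overrightarrow{C_3}]$ (underlying graph $K_3$). By Lemma~\ref{lem split implies good} it is equivalent to show that every orientation $\overrightarrow{G}$ of every $G\in\mathcal{O}_5$ admits an oriented homomorphism to $\overrightarrow{S}:=R(\overrightarrow{C_3})$, the splitable oriented graph on $6$ vertices whose underlying graph is $K_{2,2,2}$ (no arc between anti-twins) and which is vertex- and arc-transitive with every vertex of in- and out-degree exactly $2$.

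I would argue by induction on $|V(G)|$, relying on the following structural fact, which follows from the block-cut decomposition together with the tree structure of the weak dual of a $2$-connected outerplanar graph: every non-trivial $G\in\mathcal{O}_5$ contains either (i) a vertex of degree at most $1$, or (ii) a path $x,u_1,u_2,u_3,y$ on the boundary of a leaf face such that $\deg_G(u_1)=\deg_G(u_2)=\deg_G(u_3)=2$. Delete the reducible configuration to get $G'\in\mathcal{O}_5$, apply induction to obtain $\phi:\overrightarrow{G'}\to\overrightarrow{S}$, and extend $\phi$ to the removed vertices.

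Case (i) is immediate because every vertex of $\overrightarrow{S}$ has both in- and out-neighbors. The main obstacle is the extension in Case~(ii): for each of the $2^4=16$ arc-orientations of the path $x,u_1,u_2,u_3,y$ and each allowed pair $(\phi(x),\phi(y))\in V(\overrightarrow{S})^2$, one must fill in $\phi(u_1),\phi(u_2),\phi(u_3)$ preserving all four arc-directions. Thanks to arc-transitivity one may fix $\phi(x)$ and let $\phi(y)$ range over the (at most) six possibilities, reducing the task to a uniform finite check that between any two vertices of $\overrightarrow{S}$ there is a directed walk of length $4$ realising every prescribed sequence of arc-directions. A direct inspection of $\overrightarrow{S}$, whose highly symmetric in- and out-neighborhoods leave little room for failure, confirms the claim and completes the induction, giving $\chi_p(\mathcal{O}_5)\leq 3$.
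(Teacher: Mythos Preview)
Your lower bound via $C_5$ is fine, and your overall plan for the upper bound --- passing through Lemma~\ref{lem split implies good} to the anti-twinned target $\overrightarrow{S}=R(\overrightarrow{C_3})$ --- is essentially equivalent to the paper's direct use of push homomorphisms into $\overrightarrow{C_3}$. The structural reduction to a degree-$\leq 1$ vertex or a run of degree-$2$ vertices on a face is also the same device the paper invokes (it cites the Pinlou--Sopena lemma).

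There is, however, a genuine gap in your extension step. Your key claim --- that between \emph{any} two vertices of $\overrightarrow{S}$ there is a walk of length $4$ realising every prescribed sign sequence --- is false. Label $V(\overrightarrow{C_3})=\{v_1,v_2,v_3\}$ with $v_1\to v_2\to v_3\to v_1$, so that in $\overrightarrow{S}$ one has $N^+(v_1)=\{v_2,v_3'\}$, etc. A direct computation shows that the set of vertices reachable from $v_1$ by a $({+}{+}{+}{+})$-walk is exactly $V(\overrightarrow{S})\setminus\{v_1'\}$; the anti-twin $v_1'$ is \emph{not} reachable. So if, after deleting $u_1,u_2,u_3$, the inductively obtained $\phi$ happens to send $x$ and $y$ to anti-twins (which nothing in your setup forbids when the leaf face has length $\geq 6$, since then $x$ and $y$ are non-adjacent in $G'$), and the path $x u_1 u_2 u_3 y$ is directed, the extension fails.

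The paper sidesteps exactly this obstruction: instead of deleting only three degree-$2$ vertices, it deletes \emph{all} $l-2$ consecutive degree-$2$ vertices on the face. The remaining edge $xy$ of the face survives in $G'$, forcing $\phi(x)$ and $\phi(y)$ to be adjacent in the target (in particular neither equal nor anti-twins in your $\overrightarrow{S}$); under that constraint the finite check does go through for every path of length $l-1\geq 4$. If you adopt this modification your argument becomes correct and matches the paper's.
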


\begin{proof}
In~\cite{outergirth}, Pinlou and Sopena showed that every outerplanar graph with girth at least $k$
and minimum degree at least 2 contains a face of length $l \geq k$ with at least $(l-2)$ consecutive vertices of degree 2.
We will show that every push outerplanar graph of girth at least 5 admits a homomorphism to the directed 3-cycle $ \overrightarrow{C_3}$.

Let $ [\overrightarrow{M}]$ be a minimal (with respect to inclusion as a
subgraph) push outerplanar graph with girth at least 5 having no homomorphism to $ \overrightarrow{C_3}$.

\begin{itemize}

\item[(i)]  Suppose that $[\overrightarrow{M}]$ contains a vertex $u$ of degree 1. Then, due  to the minimality of 
$ [\overrightarrow{M}]$, 
the push outerplanar graph obtained by deleting the vertex $u$ from $ [\overrightarrow{M}]$ (which has girth at least 5) admits a homomorphism to $\overrightarrow{C_3}$. 
Since every vertex of $\overrightarrow{C_3}$ has in-degree and out-degree equal to 1, the homomorphism can easily be extended to obtain a homomorphism of $[\overrightarrow{M}]$ to  $\overrightarrow{C_3}$, 
a contradiction.

\item[(ii)] Suppose that $[\overrightarrow{M}]$ contains a face $ux_1x_2...x_{l-2}v$ of length $l \geq 5$ with at least $(l-2)$ consecutive vertices $x_1,x_2, ...,x_{l-2}$ of degree 2. Then, due  to the minimality of 
$ [\overrightarrow{M}]$, 
the push outerplanar graph $[\overrightarrow{M'}]$ obtained by deleting the vertices $x_1,x_2, ...,x_{l-2}$
 from $ [\overrightarrow{M}]$ (which has girth at least 5) admits a homomorphism $\phi$ to $\overrightarrow{C_3}$. 
 Now, let ${\overrightarrow{M'}}^{\text{\tiny{1}}}$ be a presentation of $[\overrightarrow{M'}]$ with 
 $\phi: {\overrightarrow{M'}}^{\text{\tiny{1}}} \rightarrow \overrightarrow{C_3}$. Note that 
 the vertices $u$ and $v$ are adjacent in ${\overrightarrow{M'}}^{\text{\tiny{1}}}$. 
 Hence, $\phi(u) \neq \phi(v)$. 

It is possible to check that (a bit tidious, but not difficult, by case analysis), given any oriented path of length $m \geq 4$, with edges $a_1a_2, a_2a_3, ...,a_{m-1}a_{m}$ and a mapping 
$\psi: \{a_1, a_m\} \rightarrow V(\overrightarrow{C_3})$ with $\psi(a_1) \neq \psi(a_m)$, it is possible to push the vertices $a_i$ for $i \in \{2,..,m-1\}$ to obtain an oriented path and extend the mapping $\psi$ to a homomorphism of that oriented path to $\overrightarrow{C_3}$.

Hence, by the above observation, we can extend the homomorphism of $[\overrightarrow{M'}]$ to $ \overrightarrow{C_3}$ to a homomorphism of $[\overrightarrow{M}]$ to $ \overrightarrow{C_3}$, 
a contradiction.

As any  cycle of odd length has push chromatic number at least 3, the  bound is tight. 
\end{itemize}
\end{proof}

\medskip

An undirected simple  graph $G$ admits an \textit{acyclic $k$-coloring} if it can be colored by $k$ colors 
in such a way that the graph induced by each color is an independent set and the graph induced by a pair of colors is a forest. 
This definition was introduced in~\cite{acyclic-def}.

\begin{theorem}\label{pushacyclic}
Every graph with acyclic chromatic number at most $k$ has push chromatic number at most $k.2^{k-2}$. 
\end{theorem}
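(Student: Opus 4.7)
The plan is to exploit Lemma~\ref{lem split implies good}: to obtain $\chi_p(G) \leq k \cdot 2^{k-2}$, it suffices to exhibit a splitable oriented graph $\overrightarrow{S} = R(\overrightarrow{T})$ of order $k \cdot 2^{k-1}$ to which every orientation of $G$ admits a homomorphism; the associated split graph $\overrightarrow{T}$ will then have order $k \cdot 2^{k-2}$, and $[\overrightarrow{G}] \to \overrightarrow{T}$ will follow from the lemma. So the proof breaks into two independent tasks: constructing the right target $\overrightarrow{S}$ and verifying that it is splitable.

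For $\overrightarrow{S}$, I would use (essentially) the classical Raspaud--Sopena target underlying the oriented bound $\chi_o \leq k \cdot 2^{k-1}$: vertices are pairs $(i, \alpha)$ with $i \in \{1, \ldots, k\}$ and $\alpha \in \mathbb{F}_2^{k-1}$ (the coordinates of $\alpha$ indexed by $\{1, \ldots, k\} \setminus \{i\}$), and I would set the arcs by the symmetric rule $(i, \alpha) \to (j, \beta)$ whenever $i \neq j$ and $\alpha_j + \beta_i \equiv 0 \pmod{2}$. Given an acyclic $k$-coloring $V_1, \ldots, V_k$ of $G$, the induced subgraph $F_{ij}$ on $V_i \cup V_j$ is a forest for every $i \neq j$. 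I would build the homomorphism $\overrightarrow{G} \to \overrightarrow{S}$ by assigning each $v \in V_i$ a vector $\alpha(v) \in \mathbb{F}_2^{k-1}$ coordinate by coordinate: for each $j \neq i$, the values $\alpha(u)_j$ (for $u \in V_i$) and $\alpha(w)_i$ (for $w \in V_j$) are determined by the linear equations $\alpha(u)_j + \alpha(w)_i = b(uw)$ over the edges $uw$ of $F_{ij}$, where $b(uw) \in \{0,1\}$ encodes the direction of the arc in $\overrightarrow{G}$. These systems decouple across the pairs $\{i,j\}$ and are always solvable on a forest by rooting each tree and propagating outward, so the construction goes through for an arbitrary orientation of $G$.

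The key new step, and the one I expect to be the main obstacle to get exactly right, is verifying that $\overrightarrow{S}$ is splitable. For this I would use the antipodal involution $\phi : (i, \alpha) \mapsto (i, \bar\alpha)$, where $\bar\alpha$ is the bitwise complement. A direct computation from the arc rule gives $N^+((i, \alpha)) = \{(j, \beta) : j \neq i,\ \beta_i = \alpha_j\} = N^-((i, \bar\alpha))$, and symmetrically $N^-((i, \alpha)) = N^+((i, \bar\alpha))$, so Observation~\ref{push why splitable} yields $\overrightarrow{S} = R(\overrightarrow{T})$ for an oriented graph $\overrightarrow{T}$ of order $k \cdot 2^{k-2}$. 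Combined with Lemma~\ref{lem split implies good}, this gives $[\overrightarrow{G}] \to \overrightarrow{T}$, hence $\chi_p(G) \leq k \cdot 2^{k-2}$. The subtlety here is that the arc rule has to be symmetric in $i$ and $j$ so that the two splitability verifications collapse to the same calculation, and one needs $k \geq 2$ (so that $\alpha \neq \bar\alpha$) to ensure $\phi$ is fixed-point-free; the case $k = 1$ is trivial.
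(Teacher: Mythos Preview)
Your overall plan is exactly the paper's: take the Raspaud--Sopena/Zielonka target $\overrightarrow{Z}_k$ on $k\cdot 2^{k-1}$ vertices, show it is splitable via the bitwise-complement involution, and apply Lemma~\ref{lem split implies good}. The paper simply cites \cite{planar80} for $\overrightarrow{G}\to\overrightarrow{Z}_k$ rather than reproving it, and chooses the explicit partition $V_1=\{x:\sum_\ell x^\ell\ge\lceil k/2\rceil\}$, $V_2=V\setminus V_1$ to feed into Observation~\ref{push why splitable}, but the bijection $V_1\to V_2$ is your complement map.

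There is, however, a real error in your arc rule, and it is precisely the ``symmetry'' you single out as essential. The condition ``$(i,\alpha)\to(j,\beta)$ whenever $i\ne j$ and $\alpha_j+\beta_i\equiv 0$'' is unchanged when $(i,\alpha)$ and $(j,\beta)$ are swapped, so every arc you create comes with its reverse: your $\overrightarrow{S}$ contains $2$-cycles and is not an oriented graph. The same symmetry makes your equations $\alpha(u)_j+\alpha(w)_i=b(uw)$ inconsistent with your own rule (which forces $b\equiv 0$ regardless of direction), and it falsifies your splitability check: with your rule $N^+((i,\alpha))=\{(j,\beta):\beta_i=\alpha_j\}$ while $N^-((i,\bar\alpha))=\{(j,\beta):\beta_i=1-\alpha_j\}$, and these are complementary rather than equal. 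The Zielonka rule used in the paper breaks the symmetry by a sign depending on whether $i<j$ or $i>j$ (namely $(i,\alpha)\to(j,\beta)$ iff $\alpha_j=\beta_i$ when $i<j$, and $\alpha_j\ne\beta_i$ when $i>j$); with \emph{that} rule one does get $N^+(x)=N^-(\bar x)$, and the rest of your argument goes through verbatim. So the strategy is right, but the claim that the rule must be symmetric in $i$ and $j$ is exactly backwards.
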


\begin{proof}
For any positive integer $k$ the \textit{Zielonka graph}~\cite{orientedchi} $\overrightarrow{Z}_k$  of order $k \times 2^{k-1}$ is  the oriented graph with    set of vertices $V(\overrightarrow{Z}_k) = \cup_{i=1,2,...,k} S_i$  where 
\begin{align}\nonumber
    S_i = \lbrace x = (x^{1},...,x^{k}) \vert x^{j} \in \lbrace 0,1 \rbrace \text{ for } j \neq i 
    \text{ and } x^{i} = * \rbrace 
\end{align}

\noindent and  set of arcs  

\begin{align}\nonumber
 A(Z_k) &= \lbrace \overrightarrow{xy} \mid  x = (x^{1},...,x^{k}) \in S_i , y = (y^{1},...,y^{k}) \in S_j 
 \text{ and }   \\ \nonumber
 & \hspace{1cm} \text{ either } x^{j} = y^{i} \text{ and }  i < j \text{  or } x^{j} \neq y^{i} \text{ and }  i > j \rbrace.
 \\ \nonumber 
\end{align}
 
\medskip 
 
Furthermore, note that the vertices of a  $\overrightarrow{Z}_k$ can be partitioned into two disjoint sets of equal size

$$V_1 = \{x = (x^{1},...,x^{k}) \in S_i \vert x^{1} + x^2 + ... +x^{i-1} + x^{i+1} + ... +x^{k} \geq \lceil k/2 \rceil \},$$
$$V_2 = \{x = (x^{1},...,x^{k}) \in S_i \vert x^{1} + x^2 + ... +x^{i-1} + x^{i+1} + ... +x^{k} < \lceil k/2 \rceil \}.$$

Also, we can define a function $f : V_1 \rightarrow V_2$ by $$f((x^{1},...,x^{i-1}, *, x^{i+1}, ...,x^{k})) = (x^{1}+1,...,x^{i-1}+1, *, x^{i+1}+1, ..., x^{k}+1)$$ where the $+$ operation is taken modulo 2. It is clear that $f$ is a bijection of the type described in Observation~\ref{push why splitable}. 
Hence $\overrightarrow{Z_k}$ is 
a splitable oriented graph. 
Therefore, $\overrightarrow{Z_k} = R(\overrightarrow{Z_k[V_1]})$.

Now let $[\overrightarrow{G}]$ be a push graph and its underlying simple graph $G$ admits an acyclic $k$-coloring. 
Raspaud and Sopena~\cite{planar80} showed that the oriented graph $\overrightarrow{G}$ admits an oriented homomorphism to $Z_k$. 
Hence by Lemma~\ref{lem split implies good} $[\overrightarrow{G}] \rightarrow \overrightarrow{Z_k[V_1]}$ where $\overrightarrow{Z_k[V_1]}$ 
is the oriented graph induced from $\overrightarrow{Z_k}$ by $V_1$. 
Clearly, $\overrightarrow{Z_k[V_1]}$ is a graph on 
$k.2^{k-2}$ vertices, hence we are done. 
\end{proof}

As 
$\chi_{p}([\overrightarrow{G}]) \leq \chi_{o}( \overrightarrow{G}) \leq 2\chi_{p}( [\overrightarrow{G}])$~\cite{push}, 
the upper bound of the above result is tight for $k \geq 3$ due to Ochem~\cite{Ochem_negativeresults}.
Now we establish the lower and upper bounds for the push chromatic number of  planar graphs.

\begin{theorem}\label{th p3}
Let $\mathcal{P}_3$ be the family of all planar graphs. Then $10 \leq \chi_{p} ( \mathcal{P}_3) \leq 40$. 
\end{theorem}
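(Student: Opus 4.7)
The plan is to prove the two bounds independently.

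For the upper bound, I would combine Theorem~\ref{pushacyclic} with Borodin's classical theorem that every planar graph has acyclic chromatic number at most $5$. Applying Theorem~\ref{pushacyclic} with $k=5$ immediately gives $\chi_p(\mathcal{P}_3)\leq 5\cdot 2^{5-2}=40$. This half is essentially a one-line citation plus substitution.

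For the lower bound, the strategy is to reformulate the question via Lemma~\ref{lem split implies good}. A push graph $[\overrightarrow{G}]$ satisfies $\chi_p([\overrightarrow{G}])\leq 9$ if and only if it maps to some oriented target $\overrightarrow{T}$ of order at most $9$, which by Lemma~\ref{lem split implies good} is equivalent to $\overrightarrow{G}\rightarrow R(\overrightarrow{T})$, i.e.\ $\overrightarrow{G}$ admits an oriented homomorphism to a \emph{splitable} oriented graph of order at most $18$. Thus, to prove $\chi_p(\mathcal{P}_3)\geq 10$, it suffices to exhibit a planar orientation $\overrightarrow{G}$ whose underlying graph is planar and that does not map to any splitable oriented graph on at most $18$ vertices.

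The natural starting point is a planar construction witnessing $\chi_o(\mathcal{P}_3)\geq 18$ (Marshall's construction), since combined with the excerpt's inequality $\chi_o\leq 2\chi_p$ this already forces $\chi_p\geq 9$. To push this up by one, I would exploit the extra rigidity of splitable targets recorded in Observation~\ref{push why splitable}: any splitable oriented graph $\overrightarrow{S}$ of order $18$ carries a fixed-point-free involution $f$ swapping two equal parts $V_1,V_2$ with $N^+(u)=N^-(f(u))$ and $N^-(u)=N^+(f(u))$. This severely restricts which $18$-vertex oriented graphs can occur as candidate targets. I would therefore augment (or modify) Marshall's graph with a small number of planar gadgets that force every image to interact with both halves in a manner inconsistent with such an involution, thereby ruling out every splitable target of order $18$ and leaving $9$ colors insufficient.

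The main obstacle is exactly this last step: the case analysis over splitable $18$-vertex targets. Once those cases are eliminated via the involution obstruction, the combination with the known lower bound $\chi_o(\mathcal{P}_3)\geq 18$ and the reduction through Lemma~\ref{lem split implies good} yields $\chi_p(\mathcal{P}_3)\geq 10$, completing the proof together with the easy upper bound above.
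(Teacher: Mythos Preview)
Your upper bound argument is exactly the paper's: Borodin's acyclic $5$-colouring of planar graphs together with Theorem~\ref{pushacyclic} gives $\chi_p(\mathcal{P}_3)\le 5\cdot 2^{3}=40$.

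For the lower bound, your reduction via Lemma~\ref{lem split implies good} is correct and in fact coincides with what the paper records as Corollary~\ref{cor 19?}: proving $\chi_p(\mathcal{P}_3)\ge 10$ is equivalent to showing that no splitable oriented graph on $18$ vertices bounds all oriented planar graphs. However, the actual content of the lower bound is precisely the step you label ``the main obstacle'' and leave undone. Your suggestion to augment Marshall's construction with gadgets that ``interact with both halves in a manner inconsistent with the involution'' is not a proof sketch but a hope: you give no gadget, no mechanism by which a planar subgraph could detect the anti-twin involution of the target, and no indication of how the case analysis over splitable $18$-vertex graphs would collapse. Marshall's argument classifies $17$-vertex universal targets and shows their core is $T_{16}$; extending this to $18$ vertices even under the splitability restriction is nontrivial, and the paper explicitly poses as an open question whether the splitability hypothesis can even be dropped from Corollary~\ref{cor 19?}.

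The paper does \emph{not} go through Marshall's work at all; it states that its lower bound proof is independent of it. Instead it argues directly on a minimal universal target $\overrightarrow{H}$ of order $\chi_p(\mathcal{P}_3)$ (whose existence follows from a compactness argument over disjoint unions). Using two explicit planar gadgets --- an $8$-vertex oriented planar graph $\overrightarrow{B_0}$ with $\chi_p=8$, and a wheel-like gadget $\overrightarrow{Y}$ --- glued onto vertices and arcs of a suitable planar graph, the paper forces every vertex of $\overrightarrow{H}$ to have degree at least $7$, every arc $\overrightarrow{ab}$ to satisfy $m_{a,b}\ge 3$, and every vertex to have two neighbours $c,d$ with $M_{a,c},M_{a,d}\ge 4$ (here $m$ and $M$ are the push-invariant min and max of the agree/disagree sets). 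These constraints already force $|V(\overrightarrow{H})|\ge 9$, with equality only if $\overrightarrow{H}$ is a tournament; a short counting argument on such a $9$-tournament (after normalising one vertex to be a source) then yields a contradiction. That structural argument is the substance you are missing.
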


\begin{proof}
Borodin~\cite{Borodinacyclic} 
showed that every planar graph 
admits an acyclic 5-coloring. 
Hence the upper bound follows by Theorem~\ref{pushacyclic}.

\medskip

Now we will prove the lower bound. 

\medskip

\textbf{\textit{Claim 1:}} There exists an oriented graph $ \overrightarrow{H}$ on $\chi_{p}(\mathcal{P}_3)$ vertices such that every
planar push graph admit a homomorphism to $ \overrightarrow{H}$. 

\medskip

\textit{Proof of the claim:} Let $OG_{\chi_{p}(\mathcal{P}_3)}$ be the set of all oriented graphs of order  
$\chi_{p}(\mathcal{P}_3)$. If our claim is false then for each $ \overrightarrow{G} \in OG_{\chi_{p}(\mathcal{P}_3)}$ there exists a planar push graph 
$[\overrightarrow{P}_G]$ that does not admit a homomorphism to $\overrightarrow{G}$. 
Let $ [\overrightarrow{X}] = \bigsqcup_{\overrightarrow{G} \in OG_{\chi_{p}(\mathcal{P}_3)}}[\overrightarrow{P}_G]$ be the disjoint union of all such graphs. Note that $ [\overrightarrow{X}]$ is a planar graph and does not admit homomorphism to any oriented graph on $\chi_{p}(\mathcal{P}_3)$ vertices and thus have $\chi_{p}([\overrightarrow{X}]) > \chi_{p}(\mathcal{P}_3)$, a contradiction. \hfill $\bullet$

\medskip

We say an oriented graph $\overrightarrow{T}$ has property \textbf{(P1)} if $ \overrightarrow{T}$ is an oriented graph on 
$\chi_{p}(\mathcal{P}_3)$   vertices such that every
planar push graph admits an homomorphism to $ \overrightarrow{T}$. 
Moreover,  $ \overrightarrow{T}$ is minimal such graph with respect to subgraph inclusion. 

By the above claim we are garunteed that there exists an oriented graph $\overrightarrow{T}$ with property \textbf{(P1)}.
First note that if $ \overrightarrow{T}$ has \textbf{(P1)} then by Proposition~\ref{th any target} any presentation of $[\overrightarrow{T}]$ also have  property \textbf{(P1)}.
Note that the graph obtained by reversing all the arcs of  any graph with  property \textbf{(P1)} also has property \textbf{(P1)}. 

Two vertices $x$ and $y$ of an oriented graph agree on a third vertex $z$ if $z \in N^{\alpha}(x) \cap  N^{\alpha}(y)$ for 
some $\alpha \in \{+,-\}$. Similarly, 
$x$ and $y$ disagree on  $z$ if $z \in N^{\alpha}(x) \cap  N^{\beta}(y)$ for $\{\alpha, \beta\} = \{+,- \}$. Let $A_{x,y}$ denote the set of vertices that $x$ and $y$ agree on and let $D_{x,y}$ denote the set of vertices that $x$ and $y$ disagree on. Note that 
given two vertices $x$ and $y$ of a fixed oriented graph the sets $A_{x,y}$ and $D_{x,y}$ remains as it is under push operation unless you push exactly one of $x,y$. If you push exactly one of $x, y$ then the two sets get interchanged. 
Therefore, the parameters $M_{x,y} = max\{A_{x,y}, D_{x,y}\}$ and $m_{x,y} = min\{A_{x,y}, D_{x,y}\}$ are push invarient.

Let $ \overrightarrow{ab}$ be an arc of $ \overrightarrow{T}$. Then push all in-neighbors of $a$. Now if in the so-obtained presentation ${\overrightarrow{T}}^{\text{\tiny{1}}} \in [\overrightarrow{H}]$ we have $|A_{a,b}| \geq  |D_{a,b}|$, then stop. 
Otherwise, 
push the vertex $x$ and reverse all the arcs of ${\overrightarrow{T}}^{\text{\tiny{1}}}$ and stop. Call the graph obtained in the final step 
$ \overrightarrow{H}$. Notice that,  $ \overrightarrow{H}$  has property \textbf{(P1)}. Moreover, all neighbors of $a$ in 
$ \overrightarrow{H}$ are out-neighbors and we have $|A_{a,b}| \geq  |D_{a,b}|$.

\medskip

\textbf{\textit{Claim 2:}} If two vertices $x$ and $y$  of a push graph has $m_{x,y} \geq 1$ then they cannot have the same image under any homomorphism.

\medskip

\textit{Proof of the claim:} Let the 4-cycle drawn in Fig~\ref{fig anti-twinned graph orientable}(b) be the graph $\overrightarrow{UC}_4$. Note that this graph is push invarient and 
no two of its vertices can  have the same homomorphic image.
Also any two vertices of a push graph with $m_{x,y} \geq 1$ must be part of a subgraph isomorphic to $\overrightarrow{UC}_4$.
\hfill $\bullet$

\medskip

 \medskip

\begin{figure}

\centering
\begin{tikzpicture}


\filldraw [black] (8,0) circle (2pt) {node[below]{$x_1$}};
\filldraw [black] (10,0) circle (2pt) {node[below]{$x_2$}};
\filldraw [black] (12,0) circle (2pt) {node[below]{$x_3$}};

\filldraw [black] (10,2) circle (2pt) {node[right]{$x_4$}};

\filldraw [black] (8,4) circle (2pt) {node[above]{$x_5$}};
\filldraw [black] (10,4) circle (2pt) {node[above]{$x_6$}};
\filldraw [black] (12,4) circle (2pt) {node[above]{$x_7$}};

\filldraw [black] (6,2) circle (2pt) {node[left]{$x_8$}};


\draw[->] (8,4) -- (9,3);
\draw[-] (9,3) -- (10,2);

\draw[->] (10,4) -- (10,3);
\draw[-] (10,3) -- (10,2);

\draw[->] (12,4) -- (11,3);
\draw[-] (11,3) -- (10,2);

\draw[->] (10,2) -- (9,1);
\draw[-] (9,1) -- (8,0);

\draw[->] (10,2) -- (10,1);
\draw[-] (10,1) -- (10,0);

\draw[->] (10,2) -- (11,1);
\draw[-] (11,1) -- (12,0);

\draw[->] (8,0) -- (9,0);
\draw[-] (9,0) -- (10,0);

\draw[->] (10,0) -- (11,0);
\draw[-] (11,0) -- (12,0);

\draw[->] (8,4) -- (9,4);
\draw[-] (9,4) -- (10,4);

\draw[->] (10,4) -- (11,4);
\draw[-] (11,4) -- (12,4);


\draw (10,0) .. controls (9,-2) and (8,-2) .. (6,2);

\draw[->] (8.8,-1.3) -- (8.8001,-1.3);

\draw (10,4) .. controls (9,6) and (8,6) .. (6,2);

\draw[->] (8.8,5.3) -- (8.8001,5.3);

\draw (12,0) .. controls (10,-4) and (8,-4) .. (6,2);

\draw[->] (9.3,-2.8) -- (9.3001,-2.8);

\draw (12,4) .. controls (10,8) and (8,8) .. (6,2);

\draw[->] (9.3,6.8) -- (9.3001,6.8);


\draw[->] (6,2) -- (7,1);
\draw[-] (7,1) -- (8,0);

\draw[->] (6,2) -- (8,2);
\draw[-] (8,2) -- (10,2);

\draw[->] (6,2) -- (7,3);
\draw[-] (7,3) -- (8,4);


\end{tikzpicture}

\caption{A planar graph $\vec{B}_0$ of order 8 with $\chi_{p}([\vec{B}_0]) = 8$.}\label{fig push planarmax}
\end{figure}
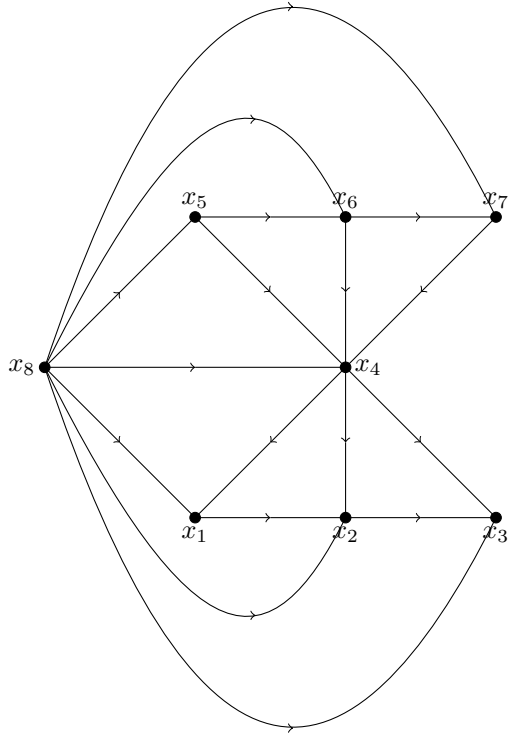

Due to the minimality of $\overrightarrow{H}$ there exists a planar push graph $[\overrightarrow{X}_0]$ with the following property: 
if $f$ is a homomorphism of a presentation ${\overrightarrow{X}}^{\text{\tiny{1}}}_0 \in [\overrightarrow{X}_0]$ to  $ \overrightarrow{H}$ then 
 for each arc $\overrightarrow{uv} \in A(\overrightarrow{H})$ there exists an arc 
$\overrightarrow{ab} \in A({\overrightarrow{X}}^{\text{\tiny{1}}}_0)$ such that $f(a) = u$ and $f(b) = v$.

Now we construct an oriented planar graph $\overrightarrow{X_1}$ by gluing a copy of the planar graph $\overrightarrow{B_0}$ (Fig.~\ref{fig push planarmax})
to each vertex of $\overrightarrow{X_0}$ by identifying the vertex with the vertex $x_8$ of $\overrightarrow{B_0}$.
Then we glue the gadget graph $\overrightarrow{Y}$  (Fig.~\ref{fig gadget})
to each vertex of $\overrightarrow{X_1}$ by identifying the vertex with the vertex $v$ of $\overrightarrow{Y}$ and obtain a new oriented graph $\overrightarrow{X_2}$.
Note that the gadget graph $\overrightarrow{Y}$ is planar and thus  $\overrightarrow{X_2}$ is also planar. 
After that we construct another oriented planar graph $\overrightarrow{X_3}$ by gluing a copy of the planar graph $\overrightarrow{B_0}$ 
to each arc of $\overrightarrow{X_2}$ by identifying the arc with the arc $\overrightarrow{x_8x_4}$ of $\overrightarrow{B_0}$.

Note that each pair of  non-adjacent vertices of the oriented planar graph $\overrightarrow{B_0}$ from Fig~\ref{fig push planarmax}
is part of a common $\overrightarrow{UC}_4$. 
So no two vertices of $\overrightarrow{B_0}$ can have the same homomorphic image 
and thus $\chi_{p}([\overrightarrow{B_0}]) = 8$. 
Moreover, $|A_{x_0,x_8}|, |D_{x_0,x_8}| \geq 3$ in  $\overrightarrow{B_0}$. Also as no two vertices of $\overrightarrow{B_0}$ can be identified,
if $[\overrightarrow{B_0}]$ admits a homomorphism $f$ to a graph $\overrightarrow{X}$ then  $|A_{f(x_0),f(x_8)}|, |D_{f(x_0),f(x_8)}| \geq 3$ in 
$\overrightarrow{X}$. 

Consider the push graph $[\overrightarrow{Y}]$ of the gadget graph $\overrightarrow{Y}$ (Fig.~\ref{fig gadget}). Here  the vertices $v, x$ and $y$ must have different image under any homomorphism as any pair of them are part an $\overrightarrow{UC}_4$. Also the vertices of the directed 5-cycle that is induced by the common neighbors of $v$ and $x$ must have distinct images under any homomorphism. Hence 
if $[\overrightarrow{Y}]$ admits a homomorphism $f$ to a graph $\overrightarrow{X}$ then  $M_{f(v),f(x)} \geq 4$ in 
$\overrightarrow{X}$. Similarly, we will have  $M_{f(v),f(y)} \geq 4$ in $\overrightarrow{X}$.

 Therefore,  each vertex in $\overrightarrow{H}$ has degree at least 7 and for each arc $\overrightarrow{ab}$ of $\overrightarrow{H}$ we have $m_{a,b} \geq 3$. Also for each vertex $a$ of $\overrightarrow{H}$ there are at least two vertices $c$ and $d$ of 
 $ \overrightarrow{H}$ with $M_{a,c}, M_{a,d} \geq 4$.

Theerefore,  for each vertex $u$ of  $ \overrightarrow{H}$ there exists a 
 vertex $v$ of $ \overrightarrow{H}$
with $m_{u,v} \geq 3$ and $M_{u,v} \geq 4$. That is,  $u$ and $v$ have at least 7 common neighbors. Thus $ \overrightarrow{H}$ has at least 
9 vertices and if $ \overrightarrow{H}$ has exactly 9 vertices  then it is a tournament.

\medskip

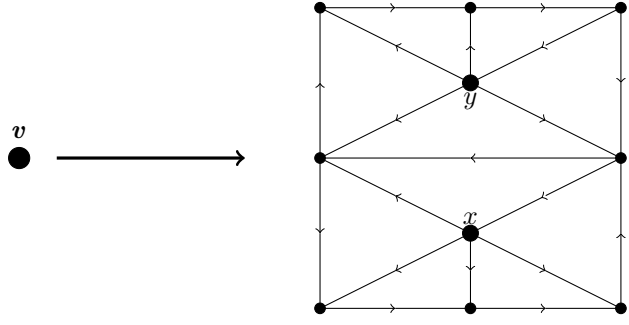
\begin{figure}

\centering
\begin{tikzpicture}

\filldraw [black] (-4,2) circle (4pt) {node[above]{}};

\node at (-4,2.35) {\textbf{\textit{v}}};

\draw[->][very thick] (-3.5,2) -- (-1,2);


\filldraw [black] (0,0) circle (2pt) {node[below]{}};
\filldraw [black] (2,0) circle (2pt) {node[below]{}};
\filldraw [black] (4,0) circle (2pt) {node[below]{}};

\filldraw [black] (0,4) circle (2pt) {node[below]{}};
\filldraw [black] (2,4) circle (2pt) {node[below]{}};
\filldraw [black] (4,4) circle (2pt) {node[below]{}};

\filldraw [black] (0,2) circle (2pt) {node[below]{}};
\filldraw [black] (4,2) circle (2pt) {node[below]{}};

\filldraw [black] (2,1) circle (3pt) {node[above]{$x$}};
\filldraw [black] (2,3) circle (3pt) {node[below]{$y$}};

\draw[->] (2,1) -- (1,.5);
\draw[-] (1,.5) -- (0,0);

\draw[->] (2,1) -- (2,.5);
\draw[-] (2,.5) -- (2,0);

\draw[->] (2,1) -- (3,.5);
\draw[-] (3,.5) -- (4,0);

\draw[->] (2,1) -- (1,1.5);
\draw[-] (1,1.5) -- (0,2);

\draw[-<] (2,1) -- (3,1.5);
\draw[-] (3,1.5) -- (4,2);


\draw[->] (2,3) -- (1,2.5);
\draw[-] (1,2.5) -- (0,2);

\draw[->] (2,3) -- (2,3.5);
\draw[-] (2,3.5) -- (2,4);

\draw[->] (2,3) -- (3,2.5);
\draw[-] (3,2.5) -- (4,2);

\draw[->] (2,3) -- (1,3.5);
\draw[-] (1,3.5) -- (0,4);

\draw[-<] (2,3) -- (3,3.5);
\draw[-] (3,3.5) -- (4,4);


\draw[->] (0,0) -- (1,0);
\draw[->] (1,0) -- (3,0);
\draw[-] (3,0) -- (4,0);

\draw[->] (4,0) -- (4,1);
\draw[-] (4,1) -- (4,2);

\draw[->] (4,2) -- (2,2);
\draw[-] (2,2) -- (0,2);

\draw[->] (0,2) -- (0,1);
\draw[-] (0,1) -- (0,0);

\draw[->] (0,2) -- (0,3);
\draw[-] (0,3) -- (0,4);

\draw[->] (0,4) -- (1,4);
\draw[->] (1,4) -- (3,4);
\draw[-] (3,4) -- (4,4);

\draw[->] (4,4) -- (4,3);
\draw[-] (4,3) -- (4,2);

\end{tikzpicture}

\caption{The gadget graph $\vec{Y}$. The thick arrow suggests that there are arcs from $v$ to every other vertices except $x$ and $y$.}\label{fig gadget}
\end{figure}

Now for the rest of the proof we will assume that $ \overrightarrow{H}$ is a tournament on 9 vertices and prove the theorem by contradicting 
this assumption.

First recall that there is a vertex $a$ of $ \overrightarrow{H}$ such that $N^+(a) = V(\overrightarrow{H}) \setminus \{a\}$. Moreover, there is a vertex $b$ of $ \overrightarrow{H}$ such that $|A_{a,b}| \geq  |D_{a,b}|$. This implies $|N^+(a) \cap N^+(b)| = 4$ and $|N^+(a) \cap N^-(b)| = 3$.
Now suppose that $N^+(a) \cap N^+(b) = A$ and $N^+(a) \cap N^+(b) = B$.

For any  vertex $x \in V(\overrightarrow{H}) \setminus \{a\}$ we have $m_{a,x} \geq 3$. 
Thus we must have $3 \leq d^+(x) \leq 4$ for all $x \in V(\overrightarrow{H}) \setminus \{a\}$.

\medskip

\textbf{\textit{Claim 3:}} Each vertex $x \in A$ must have exactly 1 in-neighbors in $B$. 

\medskip

\textit{Proof of the claim:} Suppose a vertex $x \in A$ does not have exactly 1 in-neighbors in $B$. Then one of the following cases hold:

\begin{itemize}
\item[$(i)$]  Suppose $x$ has no in-neighbors in $B$. That means $B \subseteq N^+(x)$. As  $3 \leq d^+(x) \leq 4$, there can be  at most 1 out-neighbor of $x$ in $A$. The set $A_{x,b}$ contains the vertex $a$, out-neighbors of $x$ in $A$ and in-neighbors of $x$ in $B$. Thus $|A_{x,b}| \leq 2$, a contradiction. 

\item[$(ii)$]  Suppose $x$ has exactly 2 in-neighbors in $B$.  As  $3 \leq d^+(x) \leq 4$, there must be at  least 2 out-neighbor of $x$ in $A$. 
Thus there is at most 1 in-neighbor of $x$ in $A$. The set $D_{x,b}$ contains in-neighbors of $x$ in $A$ and out-neighbors of $x$ in $B$. 
Thus $|D_{x,b}| \leq 2$, a contradiction.

\item[$(iii)$]  Suppose $x$ has exactly 3 in-neighbors in $B$.  As  $3 \leq d^+(x) \leq 4$, there must be at  least 3 out-neighbor of $x$ in $A$. 
Thus there is no in-neighbor of $x$ in $A$. The set $D_{x,b}$ contains in-neighbors of $x$ in $A$ and out-neighbors of $x$ in $B$. 
Thus $|D_{x,b}| = 0$, a contradiction. \hfill $ \bullet$
\end{itemize}

\medskip

As $|A| = 4$ and $|B| = 3$, by pigeonhole principal, there are two vertices $x,y \in A$ with a common in-neighbor in $B$.
That implies the other two vertices of $B$ are common out-neighbors of $x$ and $y$. Thus $A_{x,y} \supseteq B \cup \{a,b\}$. 
Therefore, $D_{x,y} \subseteq A \setminus \{x,y\}$. That implies $D_{x,y} \leq  |A \setminus \{x,y\}| \leq 2$, a contradiction.
\end{proof}

Note that, improving the upper bound will improve the long standing 
upper bound of oriented chromatic number of planar graphs. Indeed our result uses the proof of  the later. 
Whereas our lower bound proof is independent of the lower bound proof for oriented chromatic number of planar graphs by Marshall~\cite{marshall18}. 
Moreover, a lower bound of 9 for the push chromatic number of planar graphs can be achieved using Marshall's result 
while we provide a better lower bound of 10 for the same. Even though our lower bound does not imply any improvement of Marshall's lower bound of 18
for oriented chromatic number of planar graphs, it does imply the following corollary.

\begin{corollary}\label{cor 19?}
There exists no splitable oriented graph on 18 vertices to which every oriented planar graph admits a homomorphism to. 
\end{corollary}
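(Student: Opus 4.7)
The plan is to argue by contradiction, leveraging the bridge between splitable oriented targets and push targets provided by Lemma~\ref{lem split implies good}, together with the lower bound $\chi_p(\mathcal{P}_3) \geq 10$ just established in Theorem~\ref{th p3}. The whole point of the corollary is that our lower bound on the push chromatic number, although it is only $10$, already excludes $18$-vertex splitable bounds for the oriented chromatic number of planar graphs, because splitable targets of size $2n$ correspond to push targets of size $n$.

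Concretely, I would suppose for contradiction that there is a splitable oriented graph $\overrightarrow{S}$ on $18$ vertices such that every oriented planar graph admits a homomorphism to $\overrightarrow{S}$. By the definition of splitable, $\overrightarrow{S} = R(\overrightarrow{T})$ for some oriented graph $\overrightarrow{T}$, and since the anti-twinned construction doubles the number of vertices, $|V(\overrightarrow{T})| = 9$.

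Next I would invoke Lemma~\ref{lem split implies good} uniformly over the class of planar graphs. For every oriented planar graph $\overrightarrow{G}$, the assumption $\overrightarrow{G} \rightarrow \overrightarrow{S}$ combined with $\overrightarrow{S} = R(\overrightarrow{T})$ gives $[\overrightarrow{G}] \rightarrow \overrightarrow{T}$. Since every push planar graph has the form $[\overrightarrow{G}]$ for some orientation $\overrightarrow{G}$ of a planar graph, this means the $9$-vertex oriented graph $\overrightarrow{T}$ bounds the entire family of push planar graphs, so $\chi_p(\mathcal{P}_3) \leq 9$. This contradicts $\chi_p(\mathcal{P}_3) \geq 10$ from Theorem~\ref{th p3}, finishing the argument.

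There is no real obstacle here: the content of the corollary is just a re-reading of Theorem~\ref{th p3} through Lemma~\ref{lem split implies good}, the only quantitative input being the trivial identity $|V(R(\overrightarrow{T}))| = 2|V(\overrightarrow{T})|$. All of the substantive combinatorial work (excluding a $9$-vertex push target for planar graphs) has already been carried out in the proof of Theorem~\ref{th p3}.
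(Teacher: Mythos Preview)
Your argument is correct and is exactly the intended one: the paper does not spell out a proof but presents the corollary as an immediate consequence of the lower bound $\chi_p(\mathcal{P}_3)\ge 10$ in Theorem~\ref{th p3} via Lemma~\ref{lem split implies good}, which is precisely the bridge you use. There is nothing to add.
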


A graph is called a \textit{core graph} if it does not admit a homomorphism to any of its proper subgraph~\cite{hell}.  The unique~\cite{hell} subgraph to which 
a graph admits a homomorphism to is called its \textit{core}. 
Marshall~\cite{marshall17} first established the lower bound of 17 for oriented chromatic number of planar graphs by showing that there exists no oriented graph on 16 vertices to which every planar graph admits a homomorphism to. 
For proving this first he showed that the Tromp graph~\cite{marshall17} $T_{16}$ on 16 vertices is the only graph to which every planar graph can admit a homomorphism to. Then he constructed an example of an oriented planar graph that does not admit a homomorphism to $T_{16}$.
After that Marshall~\cite{marshall18} extended his result to prove that the only oriented graph on 17 vertices to which all planar graphs can admit a homomorphism to is an oriented graph whose core is $T_{16}$. 
An easy but significant observation is that the family of Tromp graphs, in particular $T_{16}$, are splitable graphs. 
So if one can show that the only possible oriented core graph on 18 vertices to which every planar graph admits a homomorphism to is a splitable graph, then by our result the lower bound for oriented chromatic number can be improved to 19.

\medskip

\noindent \textbf{Question:} Is it possible to get rid of the word ``splitable'' from Corollary~\ref{cor 19?}?

\medskip
   
Now we will prove a tight bound for push chromatic number for the family of planar graphs with girth at least 8.

\begin{theorem}\label{th p8}
Let $\mathcal{P}_8$ be the family of all planar graphs with girth at least 8. Then $\chi_{p} ( \mathcal{P}_8) = 4$. 
\end{theorem}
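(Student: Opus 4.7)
My plan is to establish the two bounds independently, following the general template of the proof of Theorem~\ref{pushouter}.

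For the upper bound $\chi_p(\mathcal{P}_8) \leq 4$, I would first select an oriented graph $\overrightarrow{T}$ on $4$ vertices satisfying two properties: (i) every vertex of $\overrightarrow{T}$ has both in-neighbors and out-neighbors, so that any degree-$1$ vertex can be attached to an existing coloring (as in case (i) of Theorem~\ref{pushouter}); and (ii) for any two vertices $p,q \in V(\overrightarrow{T})$ and any integer $m \geq 7$, any prescribed orientation pattern on a path of length $m$ with specified endpoint images $p$ and $q$ can be realized after pushing the internal vertices. A natural candidate is an appropriately chosen tournament on four vertices with rich directed-walk structure. Assuming for contradiction a minimal (with respect to subgraph inclusion) planar push graph $[\overrightarrow{M}]$ of girth $\geq 8$ admitting no push homomorphism to $\overrightarrow{T}$, I would first rule out degree-$1$ vertices using property (i). Then I would invoke a structural lemma for planar graphs of girth at least $8$ (analogous to the Pinlou--Sopena result used in Theorem~\ref{pushouter}): a minimum counterexample of minimum degree $\geq 2$ contains a face bounded by a walk of length $l \geq 8$ with at least $l-2$ consecutive vertices of degree $2$. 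Deleting this chain yields a smaller planar push graph of girth $\geq 8$, which by minimality admits a homomorphism $\phi$ to $\overrightarrow{T}$; property (ii) applied with $m = l-1 \geq 7$ then extends $\phi$ to all of $[\overrightarrow{M}]$, producing the desired contradiction.

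For the lower bound $\chi_p(\mathcal{P}_8) \geq 4$, I would exhibit a planar graph $G$ of girth at least $8$ together with a presentation $[\overrightarrow{G}]$ admitting no push homomorphism to any oriented graph on $3$ vertices. The only non-trivial $3$-vertex target is $\overrightarrow{C_3}$, because the transitive tournament has a source and a sink that force restrictive in/out-degree conditions incompatible with push flexibility. A push homomorphism to $\overrightarrow{C_3}$ amounts to a $\mathbb{Z}_3$-labeling of $V(G)$ compatible with both a proper $3$-coloring and a push-consistent orientation; equivalently, for each cycle of $G$ a signed arc sum must vanish modulo $3$ under some choice of vertex pushes. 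I would build $G$ as an amalgamation of long cycles whose mutual incidences create incompatible $\mathbb{Z}_3$-constraints, and orient the edges of $G$ so that no presentation of $[\overrightarrow{G}]$ satisfies all of them simultaneously.

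The principal obstacle is the first step: pinning down an explicit $4$-vertex target $\overrightarrow{T}$ for which the path-extension property (ii) holds uniformly over all lengths $m \geq 7$, since the cases of short but non-trivial path lengths tend to impose the most rigid constraints. The structural lemma for girth-$8$ planar graphs is standard in the discharging literature and should be cited or re-derived cleanly. The lower-bound construction is comparatively concrete but requires careful bookkeeping of cycle parities and $\mathbb{Z}_3$-sums to rule out every presentation admitting a $\overrightarrow{C_3}$-homomorphism.
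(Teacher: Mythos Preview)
Your lower-bound sketch is in the right spirit (reduce to the target $\overrightarrow{C_3}$ and build a planar girth-$8$ obstruction out of long cycles glued at a hub), and with care it can be made to work along the lines of the paper's explicit construction.

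The upper-bound plan, however, has a genuine gap: the structural lemma you intend to invoke does not hold for planar graphs. The Pinlou--Sopena statement (a face with $l-2$ consecutive degree-$2$ vertices) is specific to \emph{outerplanar} graphs, where the weak-dual tree forces such a chain. For planar graphs of girth $\ge 8$ and minimum degree $\ge 2$ no such conclusion is available. A concrete counterexample: take two disjoint $8$-cycles and join corresponding vertices by internally disjoint paths of length $4$. This graph is planar, has girth $8$ and minimum degree $2$; its two octagonal faces consist entirely of degree-$3$ vertices, and each remaining face has length $10$ with at most $3$ consecutive degree-$2$ vertices, far short of $l-2=8$. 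So the ``long chain on a face'' reduction simply cannot be carried out, and your path-extension property (ii) for $m\ge 7$ never gets used.

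What does go through for planar girth $\ge 8$ is the bound $\mathrm{mad}(G)<\tfrac{8}{3}$, and the paper exploits exactly this: it runs a discharging argument showing that a minimal counterexample avoids three small configurations (a degree-$1$ vertex; two adjacent degree-$2$ vertices; a degree-$3$ vertex with two degree-$2$ neighbours), and then checks that each of these can be reduced against the target $\overrightarrow{P}_3^{+}$ on four vertices. The reducibility checks are local (paths of length $2$ or $3$), so no uniform ``$m\ge 7$'' extension property is needed. If you want to rescue your outline, replace the nonexistent long-chain lemma by this discharging step and shrink your extension property accordingly.
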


\begin{proof}
If $\chi_{p}  \leq 3$, then we can prove that there exists a tournament on 3 vertices to which every 
planar push graph with girth at least 8 admits a homomorphism to by mimiking the proof of claim~1 of Theorem~\ref{th p3}. 
Now, upto push euivalence, there is only one tournament, the directed 3-cycle $\overrightarrow{C_3}$,  on three vertices. 
Therefore, given any planar graph $ \overrightarrow{H}$ with girth at least 8, there   must be a presentation 
of  $[\overrightarrow{H}]$ that admits a homomorphism to the directed 3-cycle $\overrightarrow{C_3}$ by Proposition~\ref{th any target}. 

Take the directed    9-cycle $\overrightarrow{C_9}$. Now construct the graph $\overrightarrow{H}$ by taking 
  $\overrightarrow{C_9}$ and a new vertex $v$ and then connecting each vertex of $\overrightarrow{C_9}$ to $v$ by two distinct paths of length 4 (one of them directed and the other with three forward arcs and one backward arc). 
Now consider the push graph $[\overrightarrow{H}]$. 

Notice that for any presentation $\overrightarrow{H} \in [\overrightarrow{H}]$ we will have  one 4-path, 
with either three forward arcs and one backward arc or with three backward arcs and one forward arc, connecting $v$ to each vertex of the 9-cycle. 

Observe  that the 4-path $x_0x_1...x_4$ with three forward arcs and one backward arc 
does not admit a homomorphism with $x_0$ and $x_4$ mapped to the same vertex of $\overrightarrow{C_3}$.
Now let $f$ be a homomorphism of $\overrightarrow{H}$ to $\overrightarrow{C_3}$. Then, because of the above observation, 
$f(v) \neq f(u)$ for every vertex $u$ from the 9-cycle. But we know that the 9-cycle has push chromatic 
number equal to 3. That means $f$ must be onto on the vertices of $\overrightarrow{C_3}$ when restricted to 
the 9-cycle. Hence $f(v) \notin V(\overrightarrow{C_3})$. This is a contradiction.
Hence we have the lower bound.  

\medskip

For proving the upper bound it is enough to show that every push $[\overrightarrow{G}]$ with  
maximum average degree less than $8/3$   admits a homomorphism to the Paley plus graph $\overrightarrow{P}_3^+$ due to Borodin,  Kostochka, 
 Ne\v{s}et\v{r}il,  Raspaud and Sopena~\cite{mad}. 
We will use the discharging method for our proof.

We first provide a (small) set of \textit{forbidden confgurations}, that
is a set of graphs that a minimal counterexample $[\overrightarrow{H}]$ to our claim cannot contain
as subgraphs. We will then assume that every vertex $v$ in $[\overrightarrow{H}]$ is valued by its degree
$deg(v)$ and define a \textit{discharging procedure} which specifies some transfer of values
among the vertices in $[\overrightarrow{H}]$, keeping the sum of all the values constant. We will then get
a contradiction by considering the \textit{modernized degree} $deg^*(v)$ of every vertex $v$, that
is the value obtained by $v$ owing to the discharging procedure.

\medskip

\textbf{\textit{Drawing conventions:}} In all the figures depicting forbidden configurations, we will
draw vertices with prescribed degrees as `square vertices' and vertices with unbounded
degree as `circular vertices'. All the neighbors of square vertices are drawn. Unless otherwise
specified, two or more circular vertices may coincide in a single vertex, provided
that they do not share a common square neighbor.

\textbf{\textit{Observation 1:}} It is easy to check that 
$N^{++}(\{v\}) \cup N^{--}(\{v\}) = V(\overrightarrow{P}_3^+) \setminus \{v\}$
and  $N^{+-}(\{v\}) \cup N^{-+}(\{v\}) = V(\overrightarrow{P}_3^+)$ for all $v \in V(\overrightarrow{P}_3^+)$.

First assume that $[\overrightarrow{H}]$ is a mimimal (with respect to the number of vertices) push graph
with maximum average degree less than $8/3$ that does not admit a homomorphism to
 $\overrightarrow{P}_3^+$.

\begin{figure}

\centering
\begin{tikzpicture}

\filldraw [black] (-4,5) circle (2pt) {node[below]{}};
\filldraw [black] (-3,6) circle (2pt) {node[below]{}};
\filldraw [black] (-2,5) circle (2pt) {node[below]{}};
\filldraw [black] (-3,7) circle (2pt) {node[below]{}};

\draw[->] (-4,5) -- (-3,5);
\draw[-] (-3,5) -- (-2,5);

\draw[->] (-2,5) -- (-2.5,5.5);
\draw[-] (-2.5,5.5) -- (-3,6);

\draw[->] (-3,6) -- (-3.5,5.5);
\draw[-] (-3.5,5.5) -- (-4,5);

\draw[->] (-3,7) -- (-2.5,6);
\draw[-] (-2.5,6) -- (-2,5);

\draw[->] (-3,7) -- (-3,6.5);
\draw[-] (-3,6.5) -- (-3,6);

\draw[->] (-3,7) -- (-3.5,6);
\draw[-] (-3.5,6) -- (-4,5);

\node at (-3,4) {$(a)$};

\path (0,6) edge (1,6);

\filldraw [black] (0,6) circle (2pt) {node[below]{}};

\filldraw [black] ([xshift=-2pt,yshift=-2pt]1,6) rectangle ++(4pt,4pt) {node[above]{}};

\node at (.5,5) {$(i)$};

\filldraw [black] (2,6) circle (2pt) {node[below]{}};
\filldraw [black] (5,6) circle (2pt) {node[below]{}};

\filldraw [black] ([xshift=-2pt,yshift=-2pt]4,6) rectangle ++(4pt,4pt) {node[above]{}};
\filldraw [black] ([xshift=-2pt,yshift=-2pt]3,6) rectangle ++(4pt,4pt) {node[above]{}};

\draw[-] (2,6) -- (5,6);

\node at (3.5,5) {$(ii)$};

\filldraw [black] (6,7) circle (2pt) {node[above]{$u_1$}};
\filldraw [black] (6,5) circle (2pt) {node[above]{$u_2$}};
\filldraw [black] (9,6) circle (2pt) {node[right]{$u_3$}};

\filldraw [black] ([xshift=-2pt,yshift=-2pt]7,6.5) rectangle ++(4pt,4pt) {node[above]{$v_1$}};
\filldraw [black] ([xshift=-2pt,yshift=-2pt]7,5.5) rectangle ++(4pt,4pt) {node[above]{$v_2$}};
\filldraw [black] ([xshift=-2pt,yshift=-2pt]8,6) rectangle ++(4pt,4pt) {node[above]{$v_3$}};

\draw[-] (8,6) -- (9,6);

\draw[-] (8,6) -- (6,7);

\draw[-] (8,6) -- (6,5);

\node at (7.5,5) {$(iii)$};

\node at (4,4) {$(b)$};

\end{tikzpicture}

\caption{$(a)$ The Paley plus graph $P^+_3$. $(b)$ The forbidden configurations for Theorem~\ref{th p8}.}~\label{figure orientable girth8}

\end{figure}
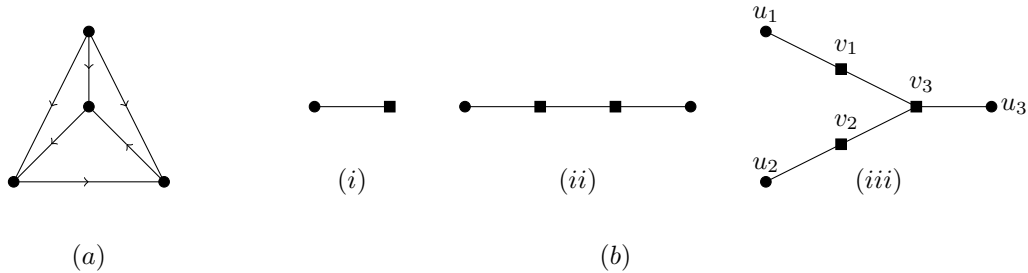

First we will show that  $[\overrightarrow{H}]$  does not contain any of the configuration depicted in 
Fig.~\ref{figure orientable girth8}.

\begin{itemize}

\item[(i)] Obvious since every vertex of $\overrightarrow{P}_3^+$  has degree at least one.

\item[(ii)] Directly follows from Observation~1.

\item[(iii)] Consider the push graph $[\overrightarrow{H'}]$ obtained by deleting all the square vertex of degree 3 from $[\overrightarrow{H}]$. Therefore, there exists a presentation 
${\overrightarrow{H'}}^{\text{\tiny{1}}} \in [\overrightarrow{H'}]$ such that ${\overrightarrow{H'}}^{\text{\tiny{1}}}$ admits a homomorphism $f'$ to $\overrightarrow{P}_7$.

Now choose a vertex $x \in V(\overrightarrow{P}_3^+) \setminus \{f'(u_1),f'(u_2),f'(u_3)\}$. 
Suppose that $x \in N^\alpha_{\overrightarrow{P}_3^+}(f'(u_3))$. 

Now consider the presentation ${\overrightarrow{H}}^{\text{\tiny{1}}} \in [\overrightarrow{H}]$ that contains ${\overrightarrow{H'}}^{\text{\tiny{1}}}$ as a subgraph
and is such that $v_3 \in N^\alpha_{{\overrightarrow{H}}^{\text{\tiny{1}}}}(u_3)$ (such a presentation is possible to obtain by 
pushing $v_3$ if needed). 

Now we can extend $f'$ to a homomorphism 
of $f$ of ${\overrightarrow{H}}^{\text{\tiny{1}}}$ to  $\overrightarrow{P}_3^+$ (by pushing the vertices $v_1$ and $v_2$ if needed) with $f(v_1) = x$ using 
 Observation~1.

\end{itemize}

We now use the following discharging procedure: each vertex of degree at least
3 gives $1/3$ 
to each of its neighbors with degree $2$.

Let us check that the modernized degree $deg^*(v)$ of each vertex $v$ is at least $8/3$ which
contradicts the assumption $mad(H)< 8/3$. We consider the possible cases for the old
degree $deg(v)$ of $v$:

\begin{itemize}

\item[(i)]  $deg(v) = 1$: there is no such vertex in $[\overrightarrow{H}]$ by (i).

\item[(ii)]$deg(v) = 2$: by (ii), both its neighbors have degree at least 3. Therefore, it receives 
exactly
$2 \times 1/3 = 2/3$, and thus $deg^*(v) = 2+2/3 = 8/3$.

\item[(iii)] $deg(v) = 3$: by  (iii),  gives away at most $1/3$. Therefore, we have 
 $deg^*(v) = 3 - 1/3 = 8/3$.

 \item[(iv)] $deg(v) = k \geq 4$:  it gives away at most $k \times 1/3 = k/2$. Therefore, we have 
 $deg^*(v) \geq  k-k/3 = 2k/3 \geq  8/3$.

\end{itemize}

 Therefore, every vertex of $[\overrightarrow{H}]$ gets a modernized degree at least $8/3$. Hence, every push graph with maximum average degree less than $8/3$ admits a homomorphism to $\overrightarrow{P}_3^+$. 

\end{proof}

\bibliographystyle{abbrv}
\bibliography{POreferences}

\end{document}